\newtheorem{theorem}    {Theorem}
\newtheorem{lemma}      {Lemma}
\newtheorem{corollary}  {Corollary}
\newenvironment{proof}      {\noindent{\em Proof.}\hspace{1em}}{\qed}
\def\squarebox#1{\hbox to #1{\hfill\vbox to #1{\vfill}}}
\newcommand{\qedbox}            {\vbox{\hrule\hbox{\vrule\squarebox{.667em}\vrule}\hrule}}
\newcommand{\qed}               {\nopagebreak\mbox{}\hfill\qedbox\smallskip}
\newcommand{\set}[1]            {\left\{ #1 \right\}}
\newcommand{\card}[1]           {\left| #1\right|}
\newcommand{\secref}[1]         {Section~\ref{sec:#1}}
\newcommand{\figref}[1]         {Fig.~\ref{fig:#1}}
\newcommand{\calset}[1]{\ensuremath{\mathop{\mathcal{#1}}\nolimits}}
\newcommand{\jobset}{\calset{J}}
\def\abstract{\vspace{.5em}
{\textit{\bf Abstract}:\,\relax%
}}
\def\keywords{\vspace{.5em}
{\textit{\bf Keywords}:\,\relax%
}}
\def\endkeywords{\par}
\begin{document}

\title{Scalable Hierarchical Scheduling for Malleable Parallel Jobs on Multiprocessor-based Systems}%
\author{Yangjie Cao$^1$, Hongyang Sun$^2$, Depei Qian$^3$, and Weiguo Wu$^4$ \\
 \\
$^1$School of Software Engineering, Zhengzhou University, China. \\
$^2$School of Computer Engineering, Nanyang Technological University, Singapore\\
$^3$School of Computer Science and Engineering, Beihang University, China\\
$^4$School of Electronic and Information Engineering, Xi'an Jiaotong University, China\\
caoyj@zzu.edu.cn,~sunh0007@ntu.edu.sg,~depeiq@buaa.edu.cn, wgwu@mail.xjtu.edu.cn
}%
\date{}
\maketitle

\begin{abstract}
The proliferation of multi-core and multiprocessor-based computer systems has led to explosive
development of parallel applications and hence the need for efficient schedulers. In this paper, we
study hierarchical scheduling for malleable parallel jobs on multiprocessor-based systems, which
appears in many distributed and multilayered computing environments. We propose a hierarchical
scheduling algorithm, named AC-DS, that consists of a feedback-driven adaptive scheduler, a desire
aggregation scheme and an efficient resource allocation policy. From theoretical perspective, we
show that AC-DS has scalable performance regardless of the number of hierarchical levels. In
particular, we prove that AC-DS achieves $O(1)$-competitiveness with respect to the overall
completion time of the jobs, or the makespan. A detailed malleable job model is developed to
experimentally evaluate the effectiveness of the proposed scheduling algorithm. The results verify
the scalability of AC-DS and demonstrate that AC-DS outperforms other strategies for a wide range
of parallel workloads.

\keywords Hierarchical scheduling; Feedback-driven adaptive scheduling; Malleable parallel jobs;
Multiprocessors
\endkeywords

\end{abstract}

\section{Introduction}

Multi-core and multiprocessor-based computers are increasingly used to support a wider range of
parallel and distributed computing environments, such as multi-clusters, grid and more recently the
cloud computing infrastructures. In these environments, the productivity and performance gains
largely depend on the effective exploitation of application parallelism across the available
computing resources. Due to the increasing scale and the dynamic nature of these modern computing
platforms, there is a need for more efficient scheduling strategies in order to effectively
allocate the available computing resources to the parallel applications
\cite{Surendran11,CaoSun12}.

Since most multi-clusters, grid and cloud computing platforms have been built on top of the
existing local resource management systems, a hierarchy of schedulers has been naturally
established \cite{Cokuslu12,ChandraSh08,Abawajy09}. A typical scheduling prototype in grid
platforms is shown in Fig. \ref{fig:grid}. At the highest level there is a grid-level scheduler
responsible for the management of the overall resources but it lacks the detailed knowledge about
the local scheduling environment, where the parallel jobs will be eventually executed. To
understand the scheduling complexity in such a hierarchical structure, one should note that the
process involves several phases of scheduling at different levels, including determining available
computing resources, determining task requirements, invoking a scheduler to determines how many
processors are allocated to each task, and monitoring task execution.

\begin{figure}[t]
\centering
    \includegraphics[width=4.0in]{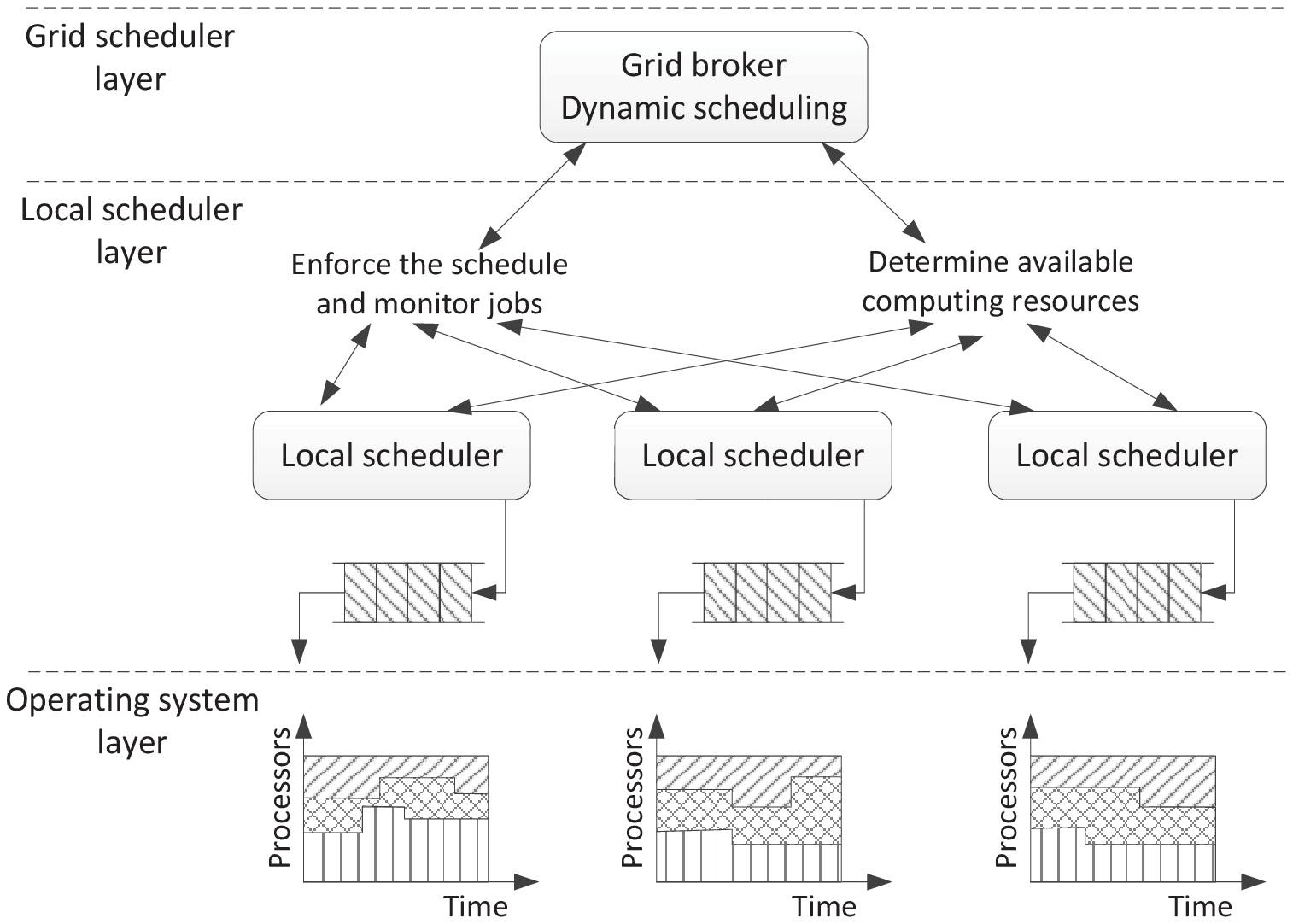}
    \caption{A typical scheduling prototype in hierarchical grid platforms.}
    \label{fig:grid}
\end{figure}

To date, much work has been reported in the two-level scheduling paradigm
\cite{AgrawalHeLe06,HeHsLe06,HeHsLe08,SunCaHs11,AgrawalLeHe08,SunCaHs09}, where a global scheduler
focuses on efficiently allocating computing resources to a partition and a local scheduler focuses
on effectively using the allocated resources to schedule the ready tasks in each partition.
Although these studies address certain important aspects of hierarchical scheduling, such as
fairness and efficiency of resource allocation, little attention has been paid to the scalability
of the scheduling algorithms with increasing hierarchical complexity in large-scale systems, such
as today's grid or cloud computing platforms. In this paper, we present a more general hierarchical
and adaptive scheduling model, where the structure of the system and the number of hierarchical
levels are not restricted. Each job is submitted into the system from a leaf node in the hierarchy,
and each intermediate node in the hierarchy contains either a group of jobs or an aggregation of
job groups. The objective is to design hierarchical scheduling algorithms that allocate the
available processors from the root of the hierarchy down to the jobs in the leaf nodes through all
intermediate levels to minimize the overall completion time, or the makespan.

The tree-structured view of a system is a useful and general notion since its succinctly describes
many different system architectures and helps us focus on the important similarities among
different systems. For instance, a virtual tree-structured approach is recently used to model and
evaluate the performance of grid computing infrastructures \cite{DaiLeTr07,LiLi09,HanKiJu09}. While
the machines in real systems often exhibit different forms of heterogeneity, we restrict our
attention to the ones that consist of homogeneous processors or cores. Due to advancement of
virtualization technologies, we believe that the differences in these individual processors or
cores can be effectively resolved, and are therefore hidden from the application interface, which
makes our model a reasonable representation of many typical hierarchical systems.

To formalize the scheduling problem in our study, we adopt the online and non-clairvoyant
scheduling model, which requires the algorithm to operate in an online manner, that is, to make
irrevocable scheduling decisions without any knowledge about the jobs' future characteristics, such
as their release times, processor requirements and remaining work. Compared to the previous work,
we present a general hierarchical scheduling paradigm for multiprocessor-based systems, which
provides a flexible approach to hierarchically allocating computing resources for a set of parallel
applications. In addition, to verify the effectiveness of different hierarchical scheduling
algorithms, we develop a malleable parallel job model using a generic set of internal parallelism
variations, which proves to be an effective tool to evaluate the performance of adaptive scheduling
algorithms. The main contributions of our paper are summarized as follows:
 \begin{itemize}
   %\item We present a general hierarchical scheduling framework for multiprocessor-based systems, which provides a flexible approach to hierarchically allocate computing resources in the system to minimize the overall completion time of jobs.
   \item Integrating a feedback-driven adaptive scheduler, a desire aggregation scheme and a resource allocation policy, we propose a hierarchical scheduling algorithm AC-DS, which scalably achieves $O(1)$-competitiveness with respect to the makespan regardless of the number of hierarchical levels. In addition, we provide a generalized analysis framework for any hierarchical scheduling algorithm with certain performance guarantees.
   %\item To verify the effectiveness of hierarchical adaptive scheduling algorithms, we develop a malleable parallel job model using a set of generic internal parallelism variations, which proves to an efficient way to evaluate the performance of adaptive scheduling algorithms.
   \item Based on a new malleable parallel job model, we conduct experiments to evaluate and compare three hierarchical scheduling algorithms. The results demonstrate that AC-DS achieves better and more stable performance than the other strategies, and in general feedback-driven algorithms outperform the simple strategy based on equal resource partitioning for a wide range of malleable workloads.
 \end{itemize}

The rest of this paper is organized as follows. In Section 2, we formally define the hierarchical
scheduling model. Section 3 describes the hierarchical algorithm AC-DS and analyzes its performance
with respect to makespan. In Section 4, a malleable job model is first presented followed by the
experimental evaluations that compare different scheduling strategies. In Section 5, we review some
related work. Finally, Section 6 concludes the paper and discusses some future directions.

\section{Hierarchical Scheduling Model}

This section formally describes the hierarchical scheduling model. A set of $n$ parallel jobs,
denoted by $\jobset = \{J_1, J_2, \ldots, J_n\}$, arrive over time in an online manner. The jobs
are assumed to be malleable, that is, they can be executed with a variable number of processors
during runtime \cite{FeitelsonRu98}. (See Section 5 for a detailed classification of parallel
jobs.) Moreover, the parallelism or the number of ready threads of a job can also change during
their executions. At any time $t$, suppose the parallelism of a job $J_i$ is $h_i(t)$ and the job
is allocated $a_i(t)$ processors, the execution rate $\Gamma_i(t)$ for the job, that is, the amount
of work done per unit of time, is given by $\Gamma_i(t) = \min\{a_i(t), h_i(t)\}$. For each job
$J_i\in \jobset$, let $r_i$ denote its release time. We also define $w_i$ and $l_i$ to be its total
work and total span, which are two important parameters representing the time to execute the job
with one processor and infinite number of processors, respectively.

The hierarchical system is organized as a tree structure with a single root and an arbitrary number
of levels, denoted by $K$. Each job $J_i \in \jobset$ is released into the system at time $r_i$
from one of the leaf nodes at the bottom of the hierarchy. The problem is to design an online
scheduling algorithm that allocates a total number of $P$ processors from the root of the tree down
to the available jobs through all intermediate levels without any knowledge of the future job
arrivals. The objective is to minimize the overall completion time of the jobs, or the makespan.
Note that when $K = 2$, i.e., there are two levels, the problem is reduced to the classical
makespan scheduling problem for malleable parallel jobs
\cite{HeHsLe06,SunCaHs11,AgrawalHeHs06,SunHs08}. Hence, our hierarchical model represents a more
general setting for the multiprocessor scheduling problem.

While the tree structure reflects the characteristics of the system hierarchy, it is assumed to be
fixed and hence cannot be altered by the online algorithm during the executions of the jobs.
Moreover, we require an online algorithm to be non-clairvoyant, that is, it must make all
scheduling decisions without knowing the characteristics of a job, such as its remaining work and
span, as well as the future parallelism variations. Finally, the scheduling decisions at each
intermediate node can only be made with direct feedbacks from its immediate children, without
knowing the decisions at the other nodes. Therefore, the processors need to be allocated in a
distributed manner. These additional challenges make the hierarchical scheduling problem
significantly more complex than many classical scheduling problems, where only centralized
decisions need to be made on a single root node with relatively flat system structures.

We evaluate the performance of our online scheduling algorithms using both theoretical analysis and
simulation study. Theoretically, the performance is bounded using competitive analysis
\cite{BorodinEl98}, which compares an online algorithm with the optimal offline scheduler which
knows complete information of the jobs in advance. Suppose the makespan of an online algorithm for
a job set $\jobset$ is $M(\jobset)$ and the makespan of the optimal offline algorithm for the same
job set is $M^*(\jobset)$. Then the online algorithm is said to be $c$-competitive if $M(\jobset)
\le c \cdot M^*(\jobset)$ holds for any job set $\jobset$. For the purpose of simulation, we
develop a novel malleable parallel job model by augmenting an existing moldable job model
\cite{Downey98} with a set of generic parallelism variations. We use the new model to drive the
simulations and to evaluate the performance and scalability of our scheduling algorithms.

\section{A Scalable Hierarchical Scheduling Algorithm}

In this section, we present a hierarchical scheduling algorithm, which consists of an adaptive
feedback-driven scheduler at the bottom level, a desire aggregation scheme at the intermediate
levels, and a dynamic resource allocation policy for allocating the processor resources. We show
that the algorithm achieves scalable performance with respect to the makespan, and we generalize
its analysis framework to other scheduling algorithms that satisfy certain properties.

\subsection{A Feedback-Driven Scheduler}

We first present a bottom-level scheduler that interacts directly with the jobs and provides
feedbacks to the higher level. For this purpose, we apply a feedback-driven scheduler, called
A-Control (or AC for short) \cite{SunCaHs11,SunHs08}, which predicts the processor requirements or
desires for each job periodically after a pre-defined interval of time, commonly known as
scheduling quantum. The processor desires are then provided to the higher-level scheduler for the
readjustment of the jobs' processor allocations in the next quantum.

Specifically, AC calculates the processor desire for a job in the next quantum based on the
information collected in the current quantum, namely, the job's average parallelism. Suppose in any
scheduling quantum $q$, which starts at time $t_q$ and lasts for $L$ amount of time, job $J_i$
completes $w_i(q)$ amount of work and reduces its span by $l_i(q)$. Due to the time-varying
characteristic of the job's parallelism, the two parameters can be obtained by $w_i(q) =
\int_{t_q}^{t_q+L} \Gamma_i(t)dt$, and $l_i(q) = \int_{t_q}^{t_q + L} \Gamma_i(t)/h_i(t)dt$
\cite{SunCaHs11}, where $\Gamma_i(t)$ and $h_i(t)$ denote the execution rate and the parallelism of
job $J_i$ at time $t$, respectively. Then the average parallelism of the job during this quantum is
given by $A_i(q) = w_i(q)/l_i(q)$, which is a well-known approach for calculating the average
parallelism of a job. AC then directly utilizes this average parallelism as the job's processor
desire for the next quantum. That is, the processor desire of the job for quantum $q+1$ is set to
be
\begin{eqnarray}
d_i(q+1) = A_i(q) \ .
\end{eqnarray}

The rationale behind this simple desire-calculation strategy is as follows: Although the average
parallelism of a job could change over time, its current parallelism is likely to be still
representative of the job's resource requirement for the near future under reasonable assumptions
about the quantum length and the job's parallelism variation.\footnote{In \cite{SunCaHs11,SunHs08},
the processor desire is set to be a linear combination of the average parallelism and the desire in
the previous quantum, where the weight on the previous desire is determined by a user-defined
convergence rate. For simplicity and performance, we set the convergence rate to be zero in this
paper, so that the fastest convergence can be achieved.} The initial processor desire for the job
in the first scheduling quantum when it is just submitted into the system is simply set to be 1.
For the ease of analysis, we say that job $J_i$ is satisfied in quantum $q$ if the number of
processors $a_i(q)$ actually allocated to the job is at least its processor desire, i.e., $a_i(q)
\ge d_i(q)$. Otherwise, the job is said to be deprived if $a_i(q) < d_i(q)$.

\subsection{A Desire Aggregation Scheme}

We now present a desire aggregation scheme for any intermediate node that takes the processor
desires from the lower levels and provides a feedback to the higher level. The scheme, called
Desire-Sum (or DS for short), collects the desires from the immediate children of the node, sums
them up as its own desire and  reports the sum to the parent node.

Formally, suppose an intermediate node $n_i^k$ at level $k$ has $m$ immediate children at level
$k-1$ in the system hierarchy. At the end of each quantum $q$, the $m$ children report to node
$n_i^k$ their processor desires for quantum $q+1$, which are denoted as $\{d_{1}^{k-1}(q+1),
d_{2}^{k-1}(q+1), \cdots, d_{m}^{k-1}(q+1)\}$. Then, node $n_i^k$ calculates the aggregate desire
$d_i^k(q+1)$ for quantum $q+1$ as follows:
\begin{eqnarray}
d_i^k(q+1) = \sum_{j=1}^{m}d_{j}^{k-1}(q+1) \ .
\end{eqnarray}

For the hierarchical scheduling problem, all levels may not share the same quantum length. For
instance, compared to the lower levels, a higher level may need a substantially longer quantum in
order to reduce the overhead in the reallocation of resources. In this paper, we make the
reasonable assumption that the quantum length at a particular level can only be an integral
multiple of that at the immediate lower level. Suppose the quantum at level $k$ has not expired
when the nodes at level $k-1$ report their desires, then these desires will be discarded, and only
the most recent ones from the lower levels are considered when the quantum at level $k$ does
expire. This is a reasonable strategy because as the number of levels increases, it is not likely
that the execution status of the jobs in the distant past is still relevant to predict the future
processor requirements.

\subsection{A Processor Allocation Policy}

Finally, to allocate processors to the nodes at each level including the jobs at the bottom level,
we apply a fair and efficient policy, called Dynamic EQui-partitioning (or DEQ for short)
\cite{McCannVaZa93}. DEQ allocates the processors received at any node to its immediate children
based on their processor desires. Generally speaking, it attempts to give a fair share of
processors to each child, but for efficiency it does not allocate more processors to a child than
what the child desires. For ease of analysis, we allow fractional processor allocation as in
\cite{HeHsLe06,HeHsLe08,SunCaHs11,AgrawalLeHe08,SunCaHs09}. This can be considered as time-sharing
a processor among several concurrently running jobs.

Suppose the quantum for level $k$ expires at the end of quantum $q$, and the node $n_i^k$ receives
$a_i^k$ processors from the higher level at the beginning of quantum $q+1$. Let $N = \{n^{k-1}_1,
n^{k-1}_2, \cdots, n^{k-1}_m\}$ denote the set of $m$ children of node $n_i^k$. The processors are
then allocated to the children nodes in $N$ as described in Algorithm 1.\footnote{For convenience,
we drop the quantum index $q+1$ for the processor desires and processor allocations.}

\begin{algorithm}
\caption{DEQ$(N, a_i^k)$}
\begin{algorithmic}[1]
\IF {$N = \emptyset$} \RETURN \ENDIF \STATE $S = \set{n^{k-1}_j\in N | d_j^{k-1} \le
a_i^k/\card{N}}$ \IF {$S = \emptyset$} \FOR {each $n^{k-1}_j \in N$} \STATE $a^{k-1}_j =
a_i^k/\card{N}$ \ENDFOR \RETURN \ELSE \FOR {each $n^{k-1}_j \in S$} \STATE $a^{k-1}_j = d^{k-1}_j$
\ENDFOR \STATE DEQ$(N\backslash  S, a_i^k - \sum_{n^{k-1}_i\in S}a^{k-1}_j)$ \ENDIF
\end{algorithmic}
\end{algorithm}

As can be seen from the pseudocode, the algorithm allocates the processors by first satisfying the
children with small processor desires in a recursive manner, and then it gives an equal share to
the remaining children with large desires. In Line 3, the algorithm considers those children whose
processor desires are not more than the current equal share $a_i^k/\card{N}$, and they will be
satisfied (Lines 8-10). Then, the policy is recursively invoked by excluding the jobs already
satisfied and the processors already allocated. As the new equal share may be increased, the
process above will be repeated until all jobs are satisfied (Lines 1-2) or no more job can be
satisfied. In the latter case, each of the remaining children will get the latest equal share
(Lines 4-7).

In a straightforward implementation of the algorithm, each iteration scans all remaining jobs and
compares their processor desires with the current equal share. In the worst case, only one job will
be satisfied and hence the algorithm will be invoked $m$ times. The time complexity of the
algorithm is therefore $O(m^2)$.

Note that the DEQ policy is applied to all levels, including the jobs at the bottom level. At any
particular level, it is only executed when the quantum for this level expires. The processors
allocated to a node (or job) will stay with the node (or job) till the beginning of the next
quantum when DEQ is invoked again. However, the lower levels may have smaller quantum lengths.
Hence, the processors could be reallocated among the nodes (or jobs) at the lower levels more
frequently than at higher levels.

\subsection{Performance Analysis}

Combining the adaptive feedback-driven scheduler AC, the desire aggregation scheme DS, and the
processor allocation policy DES, we obtain a hierarchical scheduling algorithm, which we call
AC-DS. In this section, we provide the performance analysis of the AC-DS algorithm when there is
negligible cost for processor reallocation and all levels share the same quantum length.
Specifically, we show that the competitive ratio achieved by AC-DS in this case is scalable, that
is, it does not increase with the number of levels in the hierarchy. Experimental studies are
performed for the more general cases with different quantum lengths and reallocation costs in the
next section.

Before analyzing the performance of AC-DS, we first define two relevant concepts for the jobs. For
any job $J_i \in \jobset$, we define $t^S_i$ to be its total satisfied time, that is, the overall
execution time of the job whenever the job is satisfied, and define $a_i^T$ to be the job's total
processor allocation, that is, the aggregate processor allocation the job receives throughout its
execution. For convenience, we assume that the quantum length $L$ is normalized to $1$. Therefore,
we can formally expressed the total satisfied time and the total processor allocation as $t^S_i =
\sum_{q\in Q} [J_i \in S(q)]$ and $a_i^T = \sum_{q\in Q} a_i(q)$, where $Q$ denotes the set of all
scheduling quanta, $S(q)$ denotes the set of satisfied jobs in quantum $q$, and $[x]$ returns 1 if
the proposition $x$ is true and 0 otherwise.

We now introduce an important parameter, which is called the transition factor and is denoted by $c
\ge 1$. This parameter indicates the maximum ratio on the average parallelism of any job over two
adjacent quanta \cite{SunCaHs11,SunHs08}. Specifically, let $A_i(q)$ and $A_i(q+1)$ denote the
average parallelism of job $J_i$ in quantum $q$ and $q+1$, respectively. Then the average
parallelism of the job should satisfy $\frac{1}{c} \le \frac{A_i(q)}{A_i(q+1)} \le c$ for any
quantum $q$.

The following lemma, which was formally proven in \cite{SunCaHs11,SunHs08}, gives the bounds for
the total satisfied time $t^S_i$ and total processor allocation $a_i^T$ of any job $J_i$ scheduled
under AC in terms of the job's transition factor, work, and span.

\begin{lemma}\label{property}
For any job $J_i\in \jobset$ scheduled by the AC scheduler, its total satisfied time $t^S_i$ and
total processor allocation $a_i^T$ are given by
\begin{eqnarray}
t^S_i &\le& (c+1) \cdot l_i \ , \\
a^T_i &\le& (c+1) \cdot w_i \ ,
\end{eqnarray}
where $w_i$ and $l_i$ denote the work and the span of job $J_i$, respectively, and $c$ denotes the
transition factor of the job. \qed
\end{lemma}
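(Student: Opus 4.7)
The plan is to analyze each scheduling quantum individually, use the transition factor $c$ to relate the allocation to the current average parallelism, and then sum over the appropriate set of quanta. Throughout, I rely on the fact that the DEQ policy never allocates a child more than its desire, so $a_i(q) \le d_i(q) = A_i(q-1)$ in every quantum. Combined with the transition-factor bound $A_i(q-1) \le c \cdot A_i(q)$, this yields the master per-quantum inequality
\[
a_i(q) \cdot l_i(q) \le c \cdot w_i(q),
\]
because $a_i(q) \le c A_i(q) = c\, w_i(q)/l_i(q)$.

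For the processor-allocation bound, I would split each quantum's time interval into $T_1 = \{t : h_i(t) \le a_i(q)\}$ (where allocated processors are idle) and $T_2 = \{t : h_i(t) > a_i(q)\}$ (where allocation is the bottleneck), and then observe the identity
\[
a_i(q) - w_i(q) = \int_{T_1}\bigl(a_i(q) - h_i(t)\bigr)\,dt \le a_i(q)\cdot|T_1| \le a_i(q)\cdot l_i(q) \le c\cdot w_i(q),
\]
where the second inequality uses $|T_1| \le l_i(q)$ (since the integrand $\min\{a_i(q),h_i(t)\}/h_i(t)$ equals $1$ throughout $T_1$) and the last uses the master inequality. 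Rearranging gives $a_i(q) \le (c+1)\,w_i(q)$ in every quantum, and summing yields $a^T_i \le (c+1)\,w_i$.

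For the satisfied-time bound, I restrict attention to satisfied quanta, where $a_i(q) = A_i(q-1)$ exactly (the satisfied condition plus DEQ's non-over-allocation). The other side of the transition factor, $A_i(q) \le c\cdot A_i(q-1) = c\cdot a_i(q)$, combined with the elementary lower bound $w_i(q) \ge a_i(q)\,|T_2|$ and $w_i(q) = A_i(q)\,l_i(q)$, gives $l_i(q) \ge (1-|T_1|)/c$. Together with the trivial $l_i(q) \ge |T_1|$, this implies
\[
l_i(q) \ge \max\bigl\{|T_1|,\,(1-|T_1|)/c\bigr\} \ge 1/(c+1),
\]
uniformly over satisfied quanta. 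Summing gives $l_i \ge \sum_{q \in S} l_i(q) \ge t^S_i/(c+1)$, which rearranges to the claim.

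The main obstacle I expect is the per-quantum lower bound $l_i(q) \ge 1/(c+1)$, since the parallelism function $h_i(t)$ can be arbitrary inside a quantum and one has to combine the two-sided transition-factor bound with the $T_1/T_2$ decomposition in just the right way to produce a clean constant. Once that piece is in place, both claimed bounds follow by direct summation over the relevant quanta.
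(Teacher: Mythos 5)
The paper does not actually prove this lemma: it states it with a \qed and defers entirely to the cited works \cite{SunCaHs11,SunHs08}, so there is no in-paper argument to compare against. Your reconstruction is, however, essentially the standard argument from those references, and it is sound in its main lines. The two key moves both check out: (i) the per-quantum identity $a_i(q)-w_i(q)=\int_{T_1}\bigl(a_i(q)-h_i(t)\bigr)\,dt$ together with $|T_1|\le l_i(q)$ (since $\Gamma_i(t)/h_i(t)=1$ on $T_1$) and the master inequality $a_i(q)\le d_i(q)=A_i(q-1)\le c\,A_i(q)$ gives $a_i(q)\le (c+1)\,w_i(q)$ quantum by quantum; and (ii) in a satisfied quantum the opposite direction of the transition bound, $A_i(q)\le c\,A_i(q-1)=c\,d_i(q)\le c\,a_i(q)$, combined with $w_i(q)\ge a_i(q)|T_2|$ yields $l_i(q)\ge\max\{|T_1|,(1-|T_1|)/c\}\ge 1/(c+1)$. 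Note that (ii) only needs the satisfaction inequality $a_i(q)\ge d_i(q)$, not equality, which is good because it keeps the satisfied-time bound independent of the allocation policy, consistent with the paper's remark after the lemma; the allocation bound, by contrast, genuinely uses DEQ's non-over-allocation property $a_i(q)\le d_i(q)$, so it is not policy-independent in the way the paper's remark might suggest. Two boundary issues remain unaddressed. First, in the initial quantum the desire is the fixed value $1$, not $A_i(0)$, so the transition-factor bound $A_i(1)\le c\cdot a_i(1)$ need not hold; a job with large parallelism in its first satisfied quantum can make $l_i(1)$ arbitrarily smaller than $1/(c+1)$, which strictly speaking forces an additive term of one quantum length into the satisfied-time bound (the cited works carry such additive terms or an implicit largeness assumption; the lemma as transcribed here drops them). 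Second, the final, possibly partial, quantum needs the same kind of care. Neither issue affects the asymptotic $O(1)$-competitiveness the lemma is used for, but you should flag them rather than claim the clean multiplicative bounds exactly as stated.
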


Note that the bounds shown in Lemma \ref{property} hold for any job scheduled by AC, regardless of
the desire aggregation scheme and the processor allocation policy used at the higher levels. We
will now rely on these two bounds to show the makespan performance of AC-DS.

\begin{theorem}\label{ACDS_Theorem}
For the hierarchical scheduling problem with the same quantum length in all levels, the makespan
$M(\jobset)$ for a job set $\jobset$ scheduled by the AC-DS algorithm satisfies
\begin{eqnarray}
M(\jobset) \le 2(c+1) \cdot M^*(\jobset) \ ,
\end{eqnarray}
where $M^*(\jobset)$ denotes the makespan of the job set scheduled by the optimal offline
algorithm.
\end{theorem}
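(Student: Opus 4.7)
The plan is to focus on the job $J_f \in \jobset$ whose completion time realizes the makespan $M(\jobset)$, and to split the quanta during which $J_f$ is alive into those in which $J_f$ is satisfied and those in which it is deprived. The two standard lower bounds on $M^*(\jobset)$ that I expect to use are $M^*(\jobset) \ge r_f + l_f$ (release-plus-span, since any feasible schedule, including the optimal offline one, needs at least $l_f$ units of time to finish $J_f$ after its release) and $M^*(\jobset) \ge W/P$ where $W = \sum_i w_i$ is the total work.

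For the satisfied quanta of $J_f$, the first bound of Lemma~\ref{property} directly gives $t^S_f \le (c+1)\, l_f \le (c+1)\, M^*(\jobset)$. For the deprived quanta, I will first establish the pivotal claim that whenever $J_f$ is deprived in a quantum $q$, all $P$ processors at the root are actually delivered to the jobs at the leaves, i.e.\ $\sum_i a_i(q) = P$. Summing this identity over $J_f$'s deprived quanta and combining with $\sum_i a_i^T \le (c+1)W$ (the second bound of Lemma~\ref{property}, aggregated over $i$) together with $W \le P\, M^*(\jobset)$ will yield $t^D_f \le (c+1)\, M^*(\jobset)$. The makespan then satisfies $M(\jobset) = r_f + t^S_f + t^D_f \le (c+1)(r_f + l_f) + (c+1)\, M^*(\jobset) \le 2(c+1)\, M^*(\jobset)$, using $c+1 \ge 1$ and the release-plus-span bound.

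The main obstacle is the pivotal claim above, which relies on the interplay between the DS aggregation rule and the DEQ policy across the hierarchy. I plan to prove it by a two-direction induction along the path from $J_f$ up to the root and back down through all intermediate subtrees. The structural fact to extract first is that, because DS enforces $d^k_i = \sum_j d^{k-1}_j$ at every internal node, any node that receives at least its aggregated desire from its parent has its DEQ recursively hand out exactly the desire to each child, leaving no processor idle inside its subtree; this follows from an induction on subtree depth showing that the average desire is always at most the available share, so DEQ never enters the $S=\emptyset$ branch within such a subtree. The contrapositive then says that every ancestor of a deprived leaf must itself be deprived, so climbing from $J_f$ forces the root to be deprived; the root's own DEQ then necessarily triggers the $S=\emptyset$ branch at some recursion level, committing all $P$ processors. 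Propagating back down, each intermediate node receives either exactly its aggregated desire (satisfied case, full pass-through by the structural fact) or strictly less (deprived case, in which DEQ again uses up every processor it holds, since an unused leftover would force an empty $N$ before $S=\emptyset$, contradicting deprivation). Hence the $P$ processors indeed arrive at the leaf level, establishing the pivotal claim and completing the argument.
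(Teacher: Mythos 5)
Your proposal is correct and follows essentially the same route as the paper's proof: isolate the last-finishing job, bound its satisfied time by $(c+1)l_f$ via Lemma~\ref{property}, bound its deprived time by $(c+1)\sum_i w_i / P$ using the fact that deprivation of a leaf forces deprivation of all its ancestors (hence all $P$ processors are delivered to jobs), and finish with the two standard lower bounds on $M^*(\jobset)$. Your up-and-down induction through the hierarchy is simply a more explicit justification of the ancestor-deprivation/full-allocation step that the paper asserts in one sentence.
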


\begin{proof}
The performance is obtained by bounding the total satisfied time and the total deprived time of the
last completed job in the job set, separately.

Let $J_k\in \jobset$ denote the last completed job in job set $\jobset$ scheduled by AC-DS. Then,
the makespan is the same as the completion time of $J_k$, which includes its release time $r_k$,
total satisfied time $t^S_k$, and total deprived time $t^D_k$. The total satisfied time of $J_k$,
according to Inequality (3), is given by $t^S_k \le (c+1)\cdot l_k$. When $J_k$ is deprived, since
all levels share the same quantum length, according to the desire aggregation scheme DS and the
processor allocation policy DEQ, all the ancestors of $J_k$, including the root node, in the
hierarchy are also deprived. This is because if any ancestor of $J_k$ is satisfied, it could have
satisfied all of its descendants, including $J_k$, and this property holds regardless of the number
of levels. Hence, all $P$ processors must be allocated to the jobs in this case due to the
deprivation. Based on Inequality (4), the total deprived time of job $J_k$ is therefore bounded by
$t^D_k \le \frac{\sum_{i=1}^{n} a_i^T}{P} \le (c+1)\cdot\frac{\sum_{i=1}^{n}w_i}{P}$.

The makespan of the jobs, which is the completion time of $J_k$, is then given by $M(\jobset) = r_k
+ t^S_k + t^D_k \le (c+1)\cdot (l_k + r_k) + (c+1)\cdot\frac{\sum_{i=1}^{n} w_i}{P}$. Since the
optimal offline algorithm takes at least the span $l_k$ time to complete job $J_k$ and hence the
whole job set after the release of $J_k$, so we have $M^*(\jobset) \ge l_k + r_k$. Also, we have
$M^*(\jobset) \ge \frac{\sum_{i=1}^{n} w_i}{P}$, since this is the time needed to complete all work
of the jobs even when all processors are efficiently utilized without any waste \cite{HeHsLe06}.
Based on these two lower bounds, the theorem is directly implied.
\end{proof}

Under the reasonable assumption that the jobs have smooth parallelism variations, that is, their
transition factor $c$ can be considered as a constant, Theorem \ref{ACDS_Theorem} shows that the
hierarchical scheduling algorithm AC-DS achieves $O(1)$-competitiveness in terms of the makespan of
the jobs. Moreover, Theorems \ref{ACDS_Theorem} also suggests that the competitive ratio of AC-DS
does not increase with the number of levels in the hierarchy. Hence, the algorithm is scalable and
can be used to schedule malleable parallel jobs in any hierarchical system with the same
performance guarantee. The reason of such scalability comes from the nice properties of the
algorithm's three components, namely, the performance guarantee of the AC scheduler in terms of
each individual job, the effectiveness of the DS scheme for aggregating the processor desires at
the intermediate nodes, and the efficiency of the DEQ policy for allocating the processors
throughout the hierarchy.

\begin{corollary}\label{ACDS_Corollary}
The scheduling algorithm AC-DS scalably achieves $O(1)$-competitiveness with respect to the
makespan regardless of the number of hierarchical levels.\qed
\end{corollary}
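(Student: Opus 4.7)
The plan is to derive the corollary as an essentially immediate consequence of \thmref{ACDS_Theorem}, which has already been established. The theorem gives the explicit bound $M(\jobset) \le 2(c+1) \cdot M^*(\jobset)$ for any job set scheduled by AC-DS, so all that remains is to interpret this bound in the asymptotic sense claimed by the corollary.

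First I would invoke \thmref{ACDS_Theorem} to get the competitive ratio $2(c+1)$. Then I would point out the two distinct claims packed into the corollary: (i) the ratio is $O(1)$, and (ii) it is \emph{scalable}, i.e.\ independent of the number of levels $K$. For (i), I would appeal to the standard assumption (explicitly invoked in the discussion following \thmref{ACDS_Theorem}) that jobs exhibit smooth parallelism variations, so the transition factor $c$ is a constant; consequently $2(c+1) = O(1)$. For (ii), I would observe that the constant $2(c+1)$ in the bound has no dependence on $K$ whatsoever --- neither $c$, nor the $(c+1)\cdot l_k$ bound on satisfied time inherited from \lemref{property}, nor the $(c+1)\cdot\sum_i w_i/P$ bound on deprived time, involves the depth of the hierarchy.

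The key conceptual point I would emphasize (as the real content behind the word ``scalably'') is \emph{why} depth drops out of the analysis. This is already implicit in the proof of \thmref{ACDS_Theorem}: whenever the last completed job $J_k$ is deprived, the DS aggregation together with DEQ forces every ancestor of $J_k$ up to the root to be deprived as well, since a satisfied ancestor would have propagated enough processors down to satisfy $J_k$. Hence in every deprived quantum all $P$ processors are actively consumed by some job, and the total deprived time is bounded by the global work $\sum_i w_i$ divided by $P$, independently of how many intermediate layers the tree contains.

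I do not anticipate any genuine obstacle here; the corollary is a packaging statement. The only care needed is to make clear that the constant hidden in $O(1)$ depends on $c$ but not on $K$, so the stated scalability is literal rather than asymptotic in $K$.
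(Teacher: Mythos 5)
Your proposal is correct and follows essentially the same route as the paper: the corollary is stated there with no separate proof precisely because it is read off from \thmref{ACDS_Theorem} under the assumption that the transition factor $c$ is a constant, with the observation that the bound $2(c+1)$ carries no dependence on the number of levels $K$. Your additional remark on \emph{why} depth drops out (deprivation propagating to the root under DS and DEQ) merely restates the key step already inside the paper's proof of the theorem, so nothing genuinely new or different is being done.
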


\subsection{Generalized Analysis Framework}\label{sec:framework}

From the analysis of the AC-DS algorithm, we can observe that its competitive ratio is mainly
determined by the properties of the AC scheduler at the job level, while the DS algorithm and the
DEQ algorithm are designed to maintain the performance in the presence of scheduling hierarchies.
Based on this observation, we generalize its analysis in this section to other scheduling
algorithms that can offer similar guarantees in terms of the running time and the processor
allocations.

Let X-DS denote any hierarchical scheduling algorithm that uses scheduler X to calculate processor
desires for each job and uses DS and DEQ for aggregating desires and allocating processors,
respectively. As with the analysis of AC-DS, define $t^S_i$ to be the total satisfied time and
define $a^T_i$ to be the total processor allocation for any job $J_i\in\jobset$ scheduled by X-DS.
Moreover, the two parameters can be bounded in terms of the job's work and span as follows:
\begin{eqnarray}
t^S_i &\le& \alpha \cdot l_i \ , \\
a^T_i &\le& \beta \cdot w_i \ .
\end{eqnarray}

By following the analysis given in the previous section, we can easily bound the performance of the
X-DS algorithm, which is stated in the corollary below.

\begin{corollary}\label{general}
The scheduling algorithm X-DS achieves $(\alpha + \beta)$-competitiveness with respect to the
makespan regardless of the number of hierarchical levels, provided that Inequalities (6) and (7)
are satisfied for each job. \qed
\end{corollary}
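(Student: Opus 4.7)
The plan is to mimic the proof of \thmref{ACDS_Theorem} almost verbatim, simply substituting the generic bounds (6) and (7) in place of the AC-specific bounds (3) and (4) from \lemref{property}. The key observation is that the argument in \thmref{ACDS_Theorem} only uses two structural facts about the bottom-level scheduler: that the total satisfied time of any job is linear in its span, and that the total processor allocation to any job is linear in its work. Since these are exactly what (6) and (7) provide, the rest of the machinery (DS aggregation, DEQ allocation, equal quantum length across levels) carries over unchanged.

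First I would let $J_k\in\jobset$ be the last completed job under X-DS, so that $M(\jobset)=r_k+t^S_k+t^D_k$. Next I would bound the satisfied portion directly from (6) as $t^S_k\le\alpha\cdot l_k$. Then, for the deprived portion, I would invoke the same cascading argument used for AC-DS: whenever $J_k$ is deprived and all levels share a common quantum, the DS scheme forces every ancestor of $J_k$ to have desire strictly exceeding what it receives, so by DEQ every ancestor (up to the root) is also deprived, which in turn means DEQ at the root dispatches all $P$ processors to active jobs. Consequently $P\cdot t^D_k\le\sum_{i=1}^{n}a_i^T$, and applying (7) gives $t^D_k\le\beta\cdot\frac{\sum_{i=1}^{n}w_i}{P}$.

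Finally I would combine with the two standard lower bounds on the offline optimum, $M^*(\jobset)\ge r_k+l_k$ (since even OPT must spend at least $l_k$ after $J_k$'s release to complete its critical path) and $M^*(\jobset)\ge\frac{\sum_{i=1}^{n}w_i}{P}$ (the work lower bound on $P$ processors). Summing $t^S_k\le\alpha(r_k+l_k)-\alpha r_k$ with $r_k\le M^*(\jobset)$ and $t^D_k\le\beta M^*(\jobset)$, we obtain $M(\jobset)\le(\alpha+\beta)\cdot M^*(\jobset)$, which is the claimed bound.

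I do not anticipate any real obstacle: the only nontrivial step is the cascading deprivation argument, and that is already established in the AC-DS proof and depends only on the DS/DEQ mechanics together with the common quantum assumption, neither of which involves the specific choice of bottom-level scheduler. The substitution of $\alpha$ and $\beta$ for $c+1$ is then purely cosmetic.
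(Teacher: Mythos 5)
Your proposal is correct and follows exactly the route the paper intends: the paper gives no separate proof of this corollary, stating only that it follows ``by following the analysis given in the previous section,'' i.e., by substituting the generic bounds (6) and (7) for the AC-specific bounds (3) and (4) in the proof of Theorem~1, which is precisely what you do. The only tacit assumption worth flagging is $\alpha \ge 1$, needed to absorb the release time $r_k$ into the term $\alpha(r_k + l_k)$; the paper leaves this implicit as well, since in the AC case $\alpha = c+1 \ge 2$.
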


For instance, we can apply Corollary \ref{general} to the algorithm AG-DS, which uses another
feedback-driven scheduler, called A-Greedy (or AG for short) \cite{AgrawalHeHs06}, at the bottom
level. Like AC, AG is also a quantum-based scheduler, but employs a multiplicative increase and
decrease strategy that either doubles or halves the processor desire for a job in the next quantum
depending on the job's processor utilization in the current quantum. It was shown in
\cite{AgrawalHeHs06} that the total satisfied time and the total processor allocation of any
sufficiently large job under AG can be bounded by a constant factor in terms of the job's span and
work respectively, i.e., $\alpha = \beta = O(1)$. Hence, Corollary \ref{general} implies that AG-DS
also achieves constant competitiveness with respect to makespan.

Despite achieving $O(1)$-competitiveness, the processor desires predicted by the AG scheduler are,
however, less stable compared to that of AC, even for jobs with constant parallelism profile
\cite{SunCaHs11,SunHs08}. This will inevitably affect the performance of the algorithm by
introducing extra scheduling overhead. In the next section, we will evaluate the practical
performances of these algorithms in the hierarchical scheduling environment under the more general
setting with variable quantum lengths and reallocation costs over different levels.

\section{Simulations}

In this section, we empirically evaluate the performance of our hierarchical scheduling algorithm
presented in the previous section. We first present a malleable parallel job model derived from a
traditional moldable job model augmented with a generic set of internal parallelism variations. We
then build a hierarchical scheduling framework based on the new model and conduct a set of
simulations to evaluate the scalability, utilization and makespan of our algorithm. We also compare
AC-DS with a simple policy based on equal resource sharing and another feedback-driven adaptive
scheduler at the bottom level. Finally, we study the impact of different quantum patterns and
reallocation costs on the performances of these schedulers.

\subsection{A Malleable Parallel Job Model}

Many parallel job models exist but very few of them generates malleable parallel jobs, which take
the internal parallelism variations of the jobs into account. In this paper, we derive a novel
malleable parallel job model for the empirical evaluation of adaptive scheduling algorithms.

Our model is based on the traditional moldable job models, which generate parallel jobs whose
processor allocations cannot be changed over time once decided. Hence, these models only provide
external information about the jobs, such as their work requirements, arrival patterns, average
parallelism, etc. The key task of constructing malleable job model is, therefore, to represent the
internal parallelism variations of parallel programs over time. To achieve that, we divide a
parallel job generated by a moldable job model into a series of phases and each phase is captured
by an internal structure randomly selected from one of the seven generic forms of distinct
parallelism variation curves we have identified. These curves include Step, Log, Poly(II), Ramp,
Poly(I), Exp and Impulse functions as shown in Fig. \ref{fig:profile}. The Step profile describes
the stable parallelism requirement in a given period of time; the Impulse profile, on the other
hand, represents the drastic variation of parallelism in instant time; the Ramp profile describes
linear increasing and decreasing parallelism; the Exp, Log and the two kinds of Poly profiles
describe sub-linear and super-linear changing parallelism, respectively. Moreover, these kinds of
parallelism variation curves can reflect a wide range of real parallel program running patterns.
For instance, the Impulse profile can emulate a drastic one-off increase in parallelism typically
encountered in, e.g., a short parallel FOR loop, while the Step profile can represent a more stable
data-parallel section of the job. The Ramp profile as well as the other profiles can model
increases in the job's parallelism with different rates for spawning parallel threads. Fig.
\ref{fig:pattern} demonstrates several ideal running parallelism patterns through real parallel
program segments. In the figure, function F0() represents a thread with a large amount of
computation invoked repeatedly by four different functions constructing the given parallelism
profiles. As shown in the Fig. \ref{fig:pattern}, function F1() consists of a fully parallelized
FOR loop without interdependency profiling the Step curve; functions F2(), F3() and F4()
recursively spawns themselves and other threads with different calling patterns, hence creating
various rates of increasing parallelism.

\begin{figure}[t]
\centering
    \includegraphics[width=2.5in]{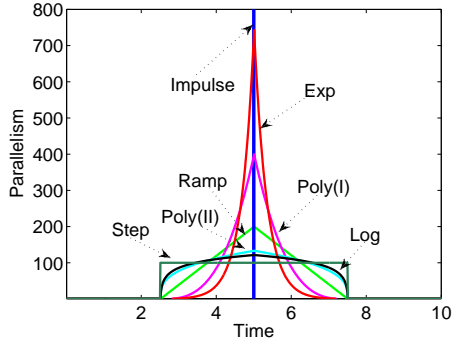}
    \caption{Seven parallelism variation curves described by Step, Log, Poly(II), Ramp, Poly(I), Exp and Impulse functions.}
    \label{fig:profile}
\end{figure}

\begin{figure}[t]
\centering
    \includegraphics[width=4in]{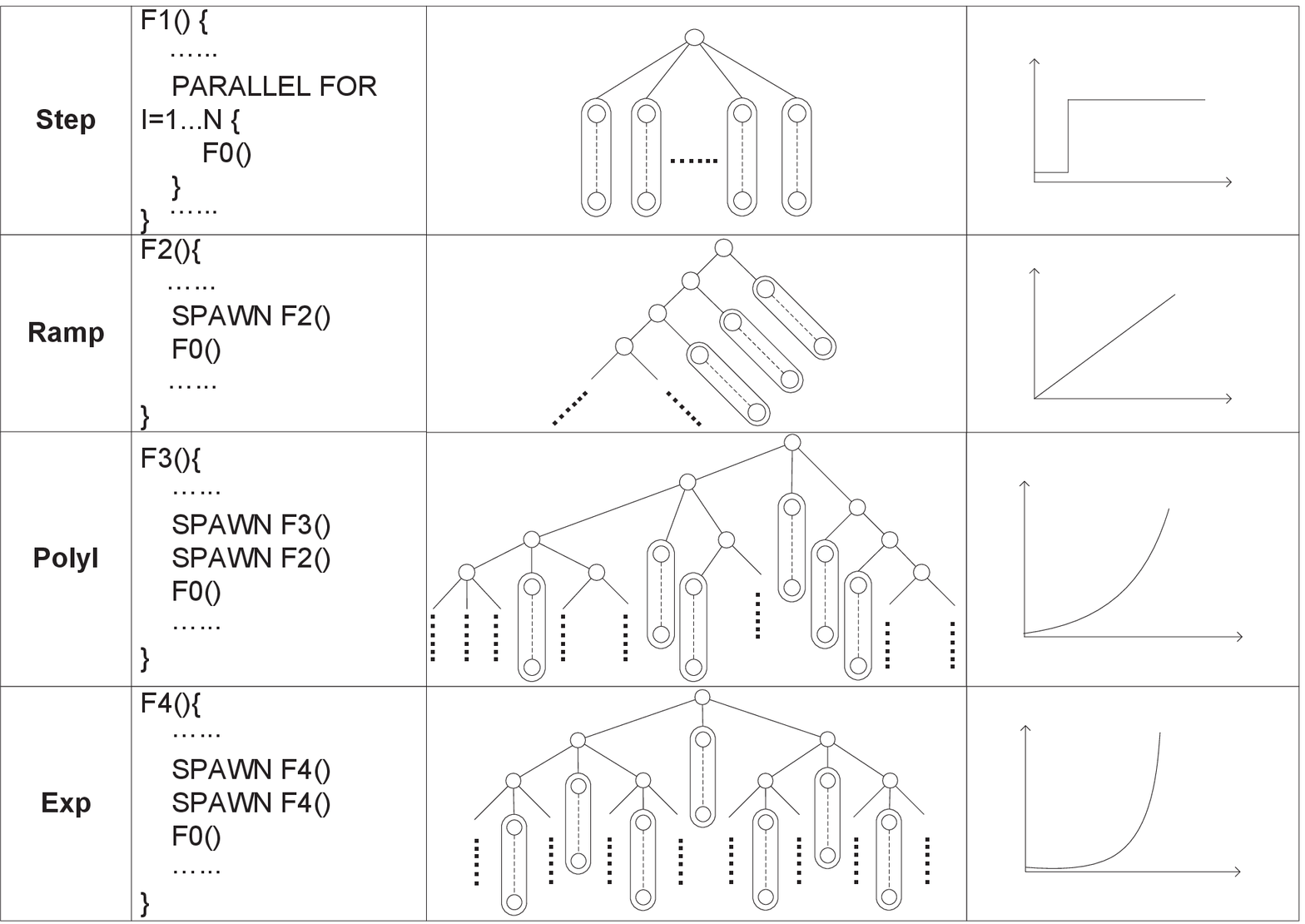}
    \caption{Sample parallel program segments and their corresponding parallelism variation over time.}
    \label{fig:pattern}
\end{figure}

These kinds of internal parallelism profiles provide a flexible way to construct malleable jobs
whose parallelism changes with time. However, it is also a challenge to maintain consistency with
the original moldable job model. When implementing our model, we follow a basic rule to maintain
such consistency and ensure that all kinds of internal variation curves are coherent with each
other. Specifically, we always maintain the same work, average parallelism and length for each
phase regardless of the variation curve, as shown in Fig. \ref{fig:profile}. Moreover, we combine a
pair of increasing and decreasing profiles together to create a basic parallelism variation block.
To ease generation, we first construct the Step profile which is the simplest one to ensure that
the work and the average parallelism adhere to those initially generated from the moldable job
model. Then, other parallelism variation blocks are derived from the Step profile by varying the
degree of the internal parallelism curves but with the same phase length, work requirement and
average parallelism. Therefore, any combination of the variation curves for a job is ultimately
consistent with the original moldable one.

\subsection{Simulation Setup}

Based on the new malleable job model, we build a multi-level scheduling framework, which simulates
the execution of parallel jobs on 256 processors. In this framework, we implement a
request-allotment protocol to support the feedback mechanism and the processor reallocation among
different levels. The number of levels $K$ in the system hierarchy is increased from 2 to 5, and
for a given $K$ the tree structure that represents the system is randomly generated with the number
of children of each intermediate node uniformly selected from $[1,5]$. Each level has its
independent scheduling quantum to aggregate processor requirements from its children and to
readjust resource allocations. The quantum length at the bottom level, denoted by $L$, is set to be
$1ms$. The malleable workloads are generated by following Downey's moldable job model
\cite{Downey98}, and the profile type of each internal phase is randomly selected with the phase
length set to be $10\cdot L$. In Downey's model, the job arrivals are modeled by a Poisson process,
and the arrival rate is related to the system load. The load of the system is in turn proportional
to the number of jobs, which in our experiments is increased from 20 to 500 with an increment of 20
each time. Each released job is randomly assigned to one of the leaf nodes in the system hierarchy.
The parameters used in Downey's model are listed in Table 1.

\begin{table}[h]
\caption{Parameters used in Downey's model}
\begin{center}
\begin{tabular}{|c|c|}
  \hline
  Parameters & Value  \\
  \hline \hline
  Number of jobs in each workload (n) & $160*\rho$ \\
Offered load ($\rho$) & [0.5 3] \\
Average parallelism (A) & [1 256] \\
Job size parameter ($\beta_1$) & -0.14 \\
Job size parameter ($\beta_2$) & 0.073 \\
  \hline
\end{tabular}
\end{center}
\label{table:TB1}
\end{table}

Besides implementing our hierarchical algorithm AC-DS, we also implement two other natural
scheduling algorithms and compare their performances. The first one, called EQUI-EQUI, is based on
the simple equi-partitioning (EQUI) scheduler \cite{Edmonds99,EdmondsChBr03} that divides the
received processors at each node evenly among its immediate children that still contain unfinished
jobs. The other scheduler is the feedback-driven adaptive scheduler AG-DS introduced in
\secref{framework}. To compare the performances of these scheduling algorithms, we use processor
utilization and makespan as the metrics. For any algorithm, its cost at a particular load is taken
by carrying out the experiments 10 times and taking the average.

\begin{figure}[t]
\centering
    \subfloat[AC-DS]
    {
        \label{fig:Scalability:AC}
        \includegraphics[width=2.5in]{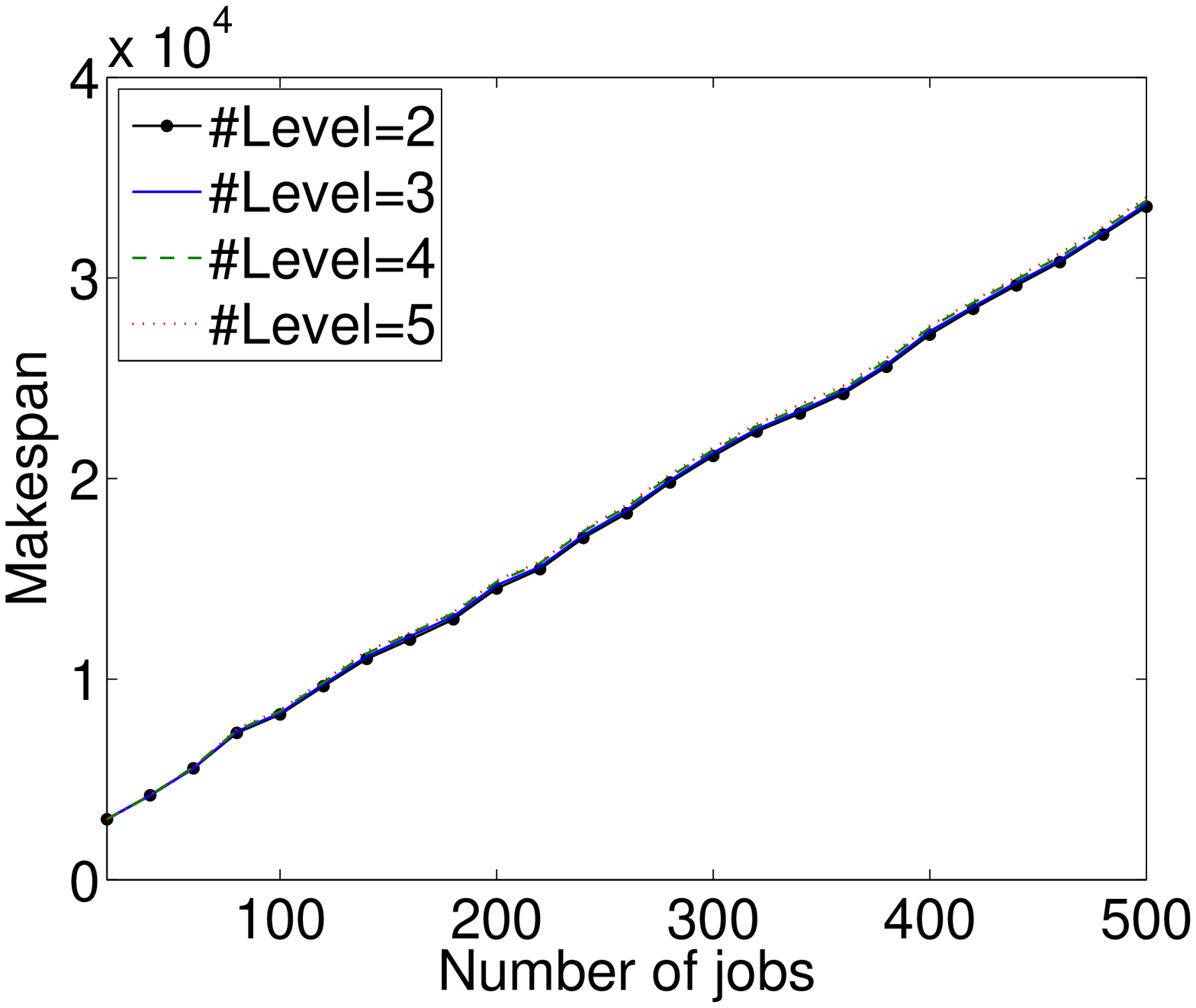}
    }
    \subfloat[AG-DS]
    {
        \label{fig:Scalability:AG}
        \includegraphics[width=2.5in]{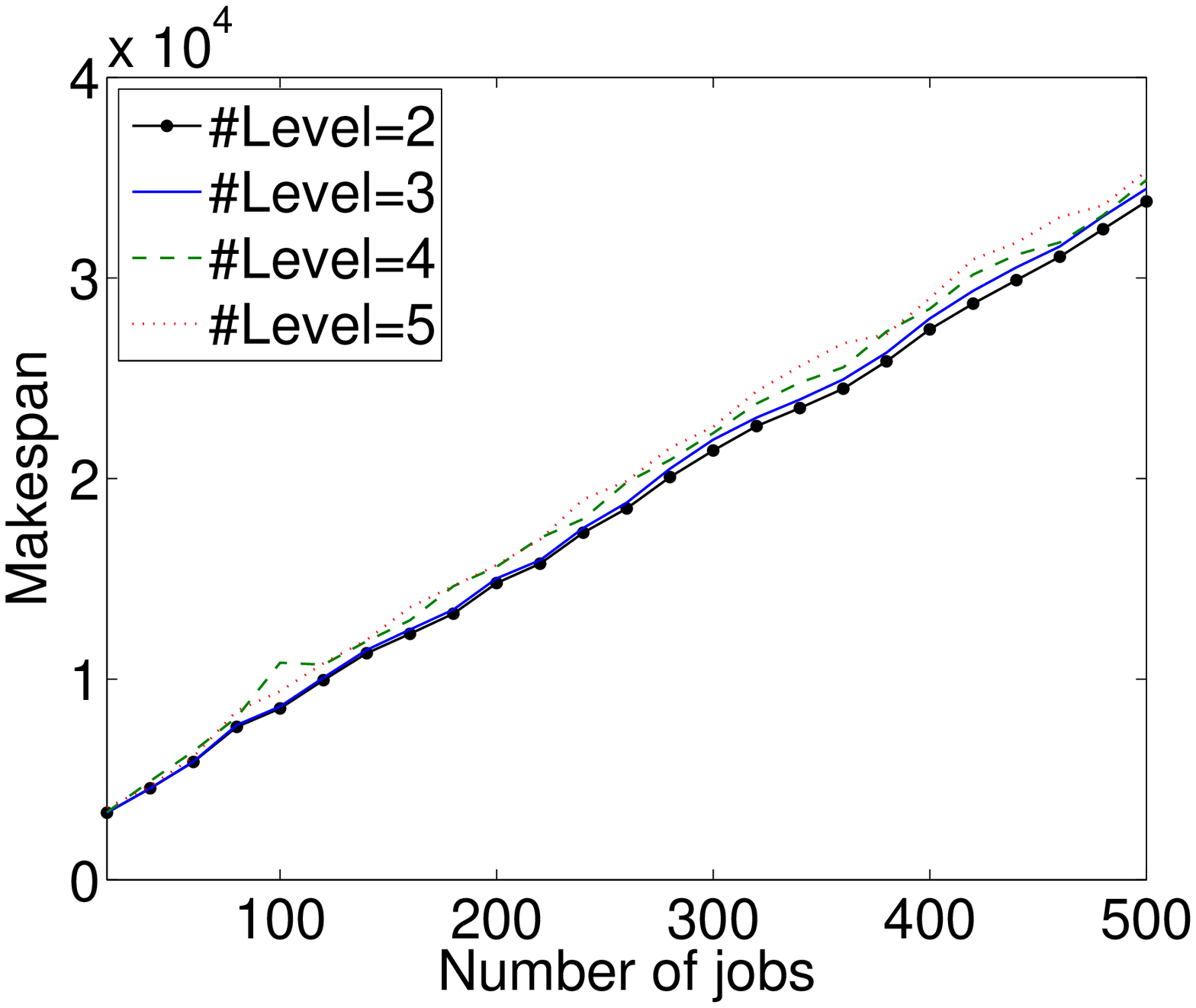}
    }
    \caption{Scalability of AC-DS and AG-DS with respect to makespan when increasing the number of levels.}
    \label{fig:Scalability}
\end{figure}
\begin{figure}[t]
\centering
    \subfloat[Number of levels = 2]
    {
        \label{fig:Utilization:L1}
        \includegraphics[width=2.5in]{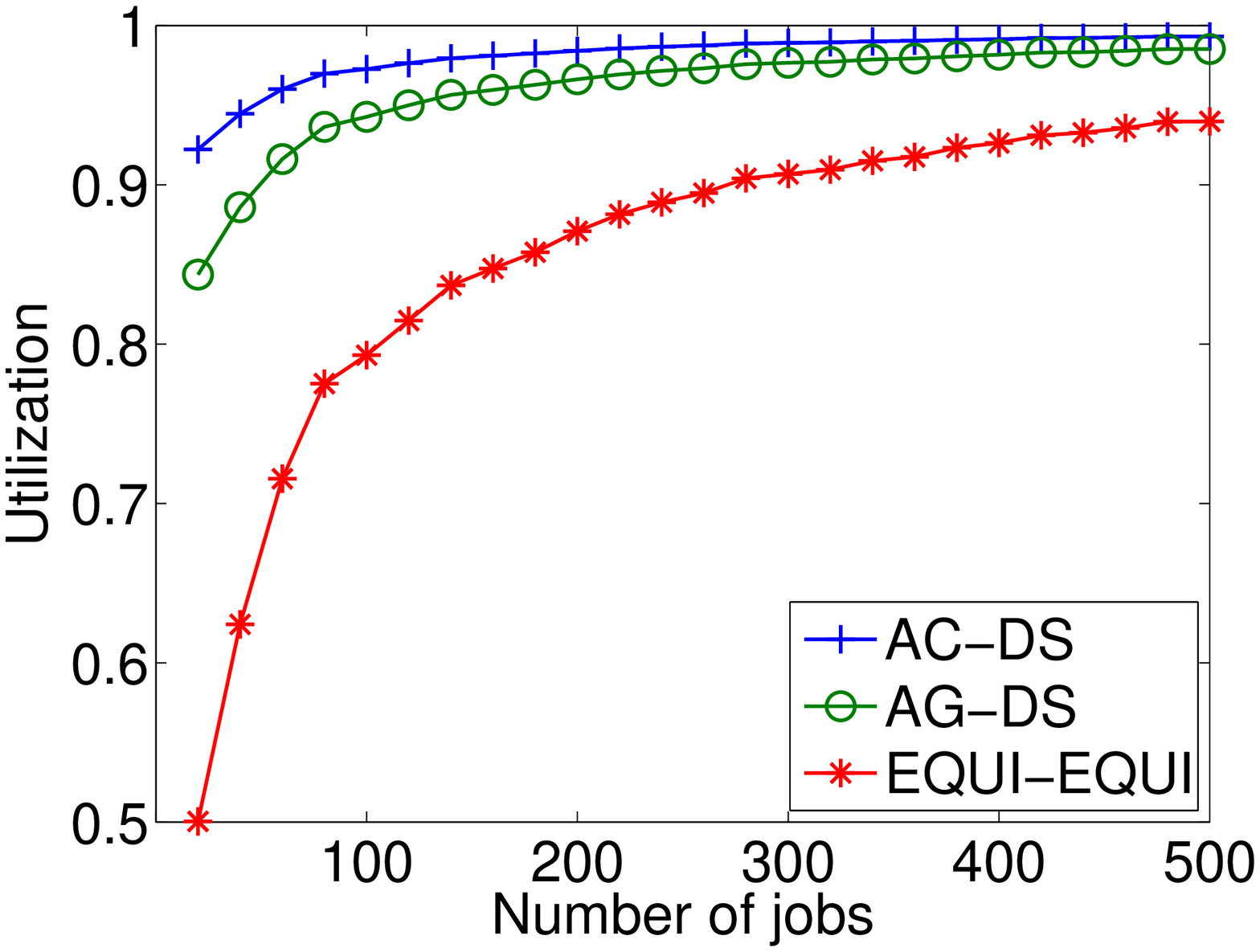}
    }
    \subfloat[Number of levels =  3]
    {
        \label{fig:Utilization:L2}
        \includegraphics[width=2.5in]{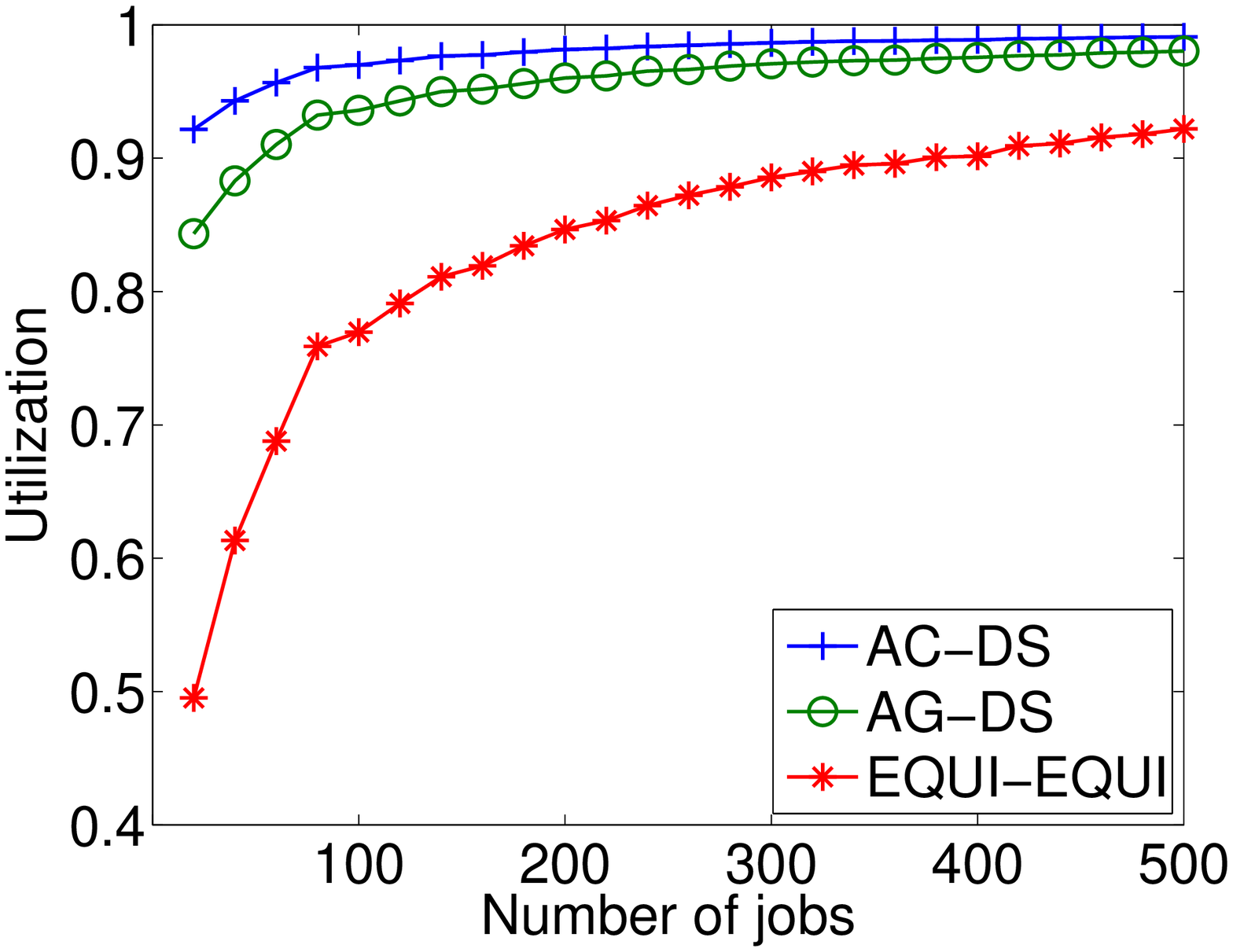}
    }

    \subfloat[Number of levels = 4]
    {
        \label{fig:Utilization:L3}
        \includegraphics[width=2.5in]{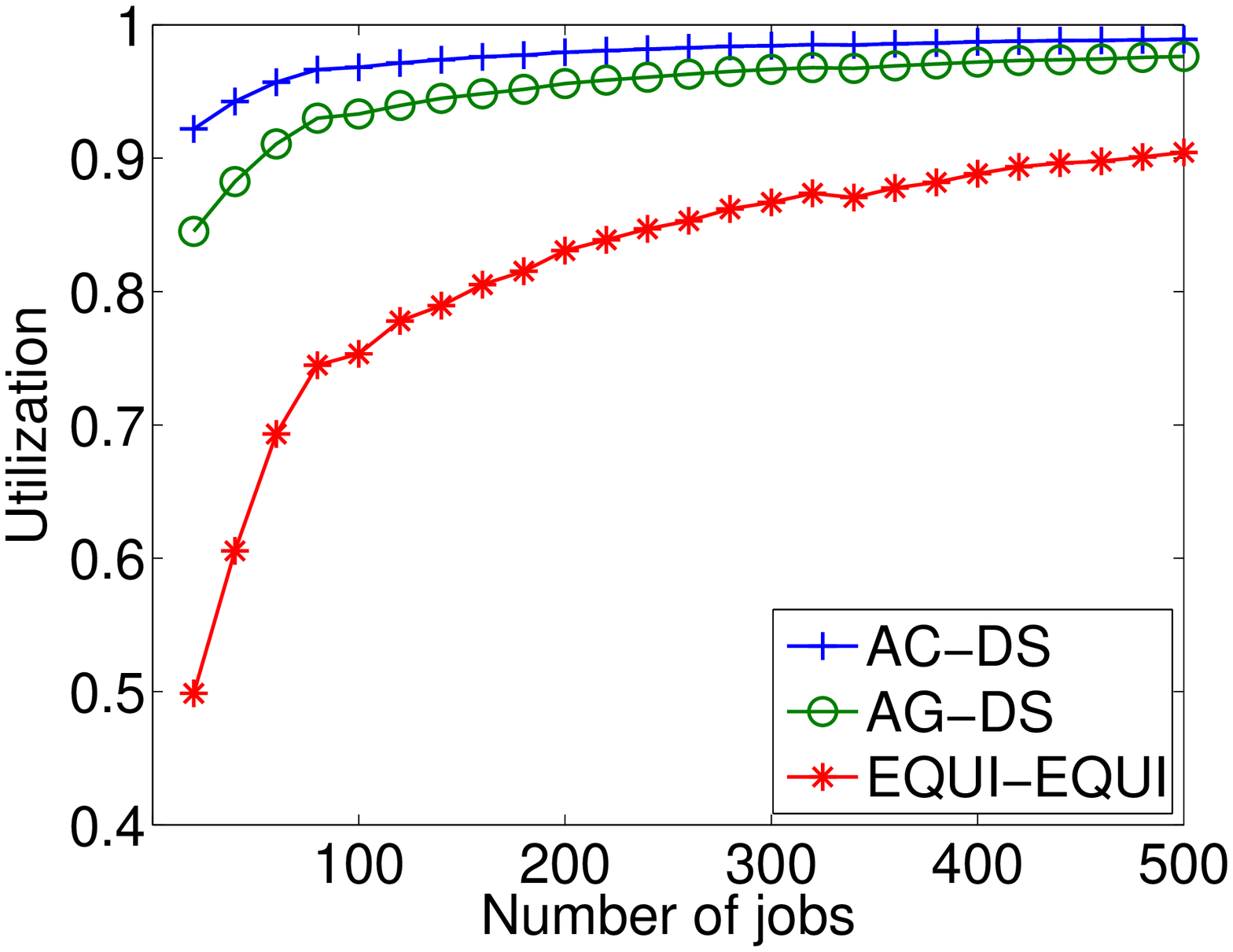}
    }
    \subfloat[Number of levels =  5]
    {
        \label{fig:Utilization:L4}
        \includegraphics[width=2.5in]{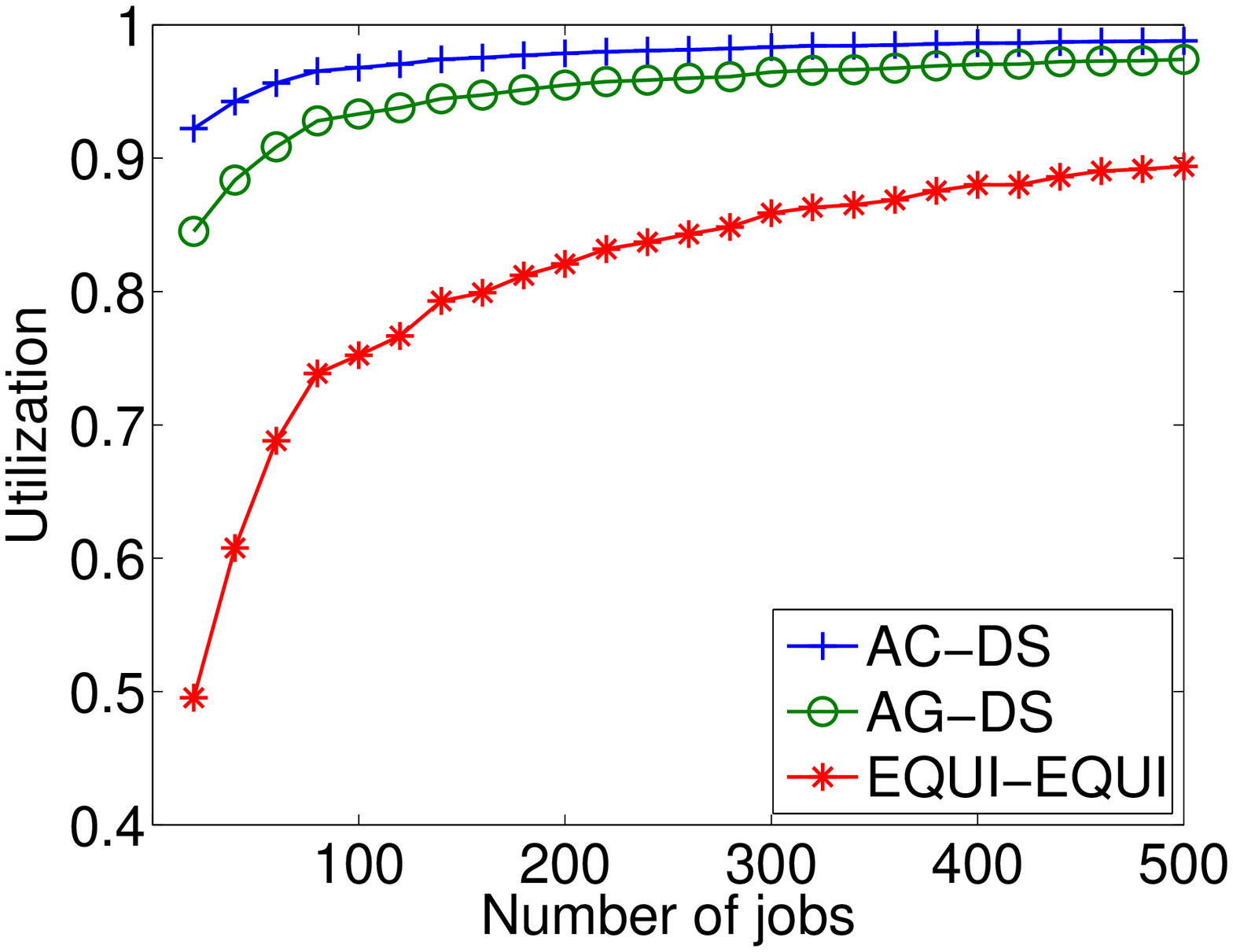}
    }
    \caption{Utilization comparisons of AC-DS and AG-DS with EQUI-EQUI.}
    \label{fig:Utilization}
\end{figure}

\subsection{Simulation Results}

\indent\indent(1) \emph{Scalability of feedback-driven scheduling policies}

The scalability of a hierarchical scheduling algorithm measures its capability to respond to the
increasing number of hierarchical levels, which to a large extent reflects the complexity of the
system. From the simulation results, which is shown in \figref{Scalability}, we can see that AC-DS
achieves slightly better scalability than AG-DS. The makespan of AC-DS nearly converge to a single
line when the number of levels increases from 2 to 5. On the other hand, the performance of AG-DS
experiences some instability with increasing number of levels. In particular, AG-DS exhibits
slightly degrading performance when the number of levels reaches 4, as shown in
\figref{Scalability:AG}. The reason is that the task scheduler AC used by AC-DS provides a more
stable and efficient feedback scheme than the scheduler AG used by AG-DS when calculating the
processor requests. This influences the aggregate resource feedbacks and hence the overall
performance of the algorithms.

\indent(2) \emph{Utilization comparison of different scheduling policies}

Fig. \ref{fig:Utilization} shows the utilizations of AC-DS, AG-DS, and EQUI-EQUI when the number of
levels increases from 2 to 5. From the simulation results, we can see that the utilization of AC-DS
is much better than that of the other two scheduling algorithms. The main reason is that AC-DS
takes advantage of its stable scheduler in providing parallelism feedbacks and therefore has the
ability to efficiently reallocate processors among the nodes and jobs. The simulation results also
demonstrate that the utilizations of the two feedback-driven scheduling algorithms AC-DS and AG-DS
are significantly better than that of EQUI-EQUI for a wide range of workloads. Specifically, both
AC-DS and AG-DS achieve a higher and more stable utilization that reaches nearly 90\%. On the other
hand, the utilization of EQUI-EQUI is heavily influenced by the system load. The reason is that
EQUI-EQUI is blind to the resource requirements at both job and node levels when allocating
processors and thus it inevitably wastes a lot of processor cycles. Only when the system is heavily
loaded, the utilization gap of the three algorithms becomes smaller because in this case there are
not enough processor resources to be reallocated, although some nodes or jobs may still have high
processor requirements. The simulation results demonstrate that the feedback-driven adaptive
schedulers are more suitable to the situation where the system has light to medium loads. When the
system load is heavy, however, the benefit of adaptive scheduling may be offset by the cost of
reallocation overhead.

\begin{figure*}[t]
\centering
    \subfloat[number of levels =  2]
    {
        \label{fig:Comparison:L1}
        \includegraphics[width=2.5in]{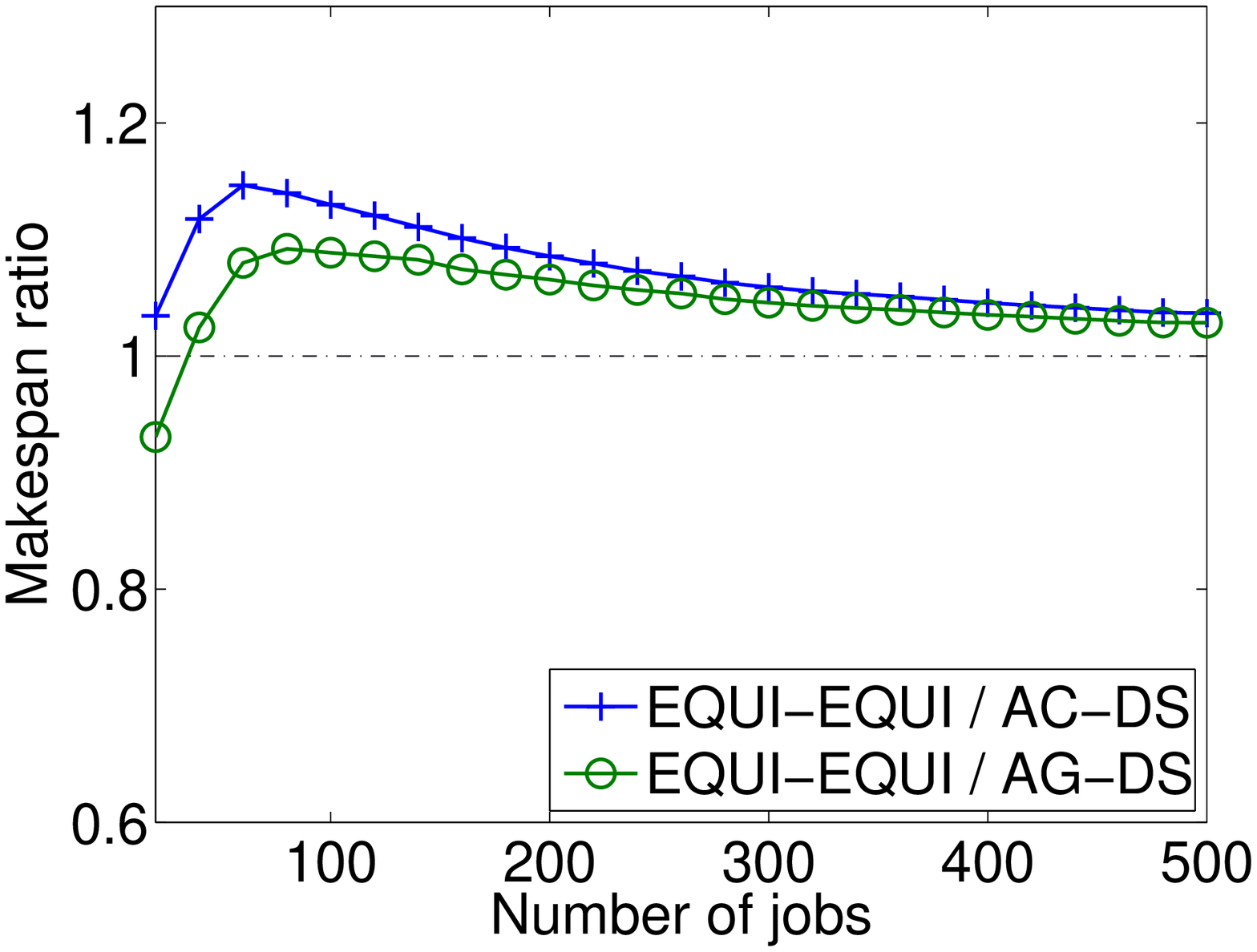}
    }
    \subfloat[number of levels =  3]
    {
        \label{fig:Comparison:L2}
        \includegraphics[width=2.5in]{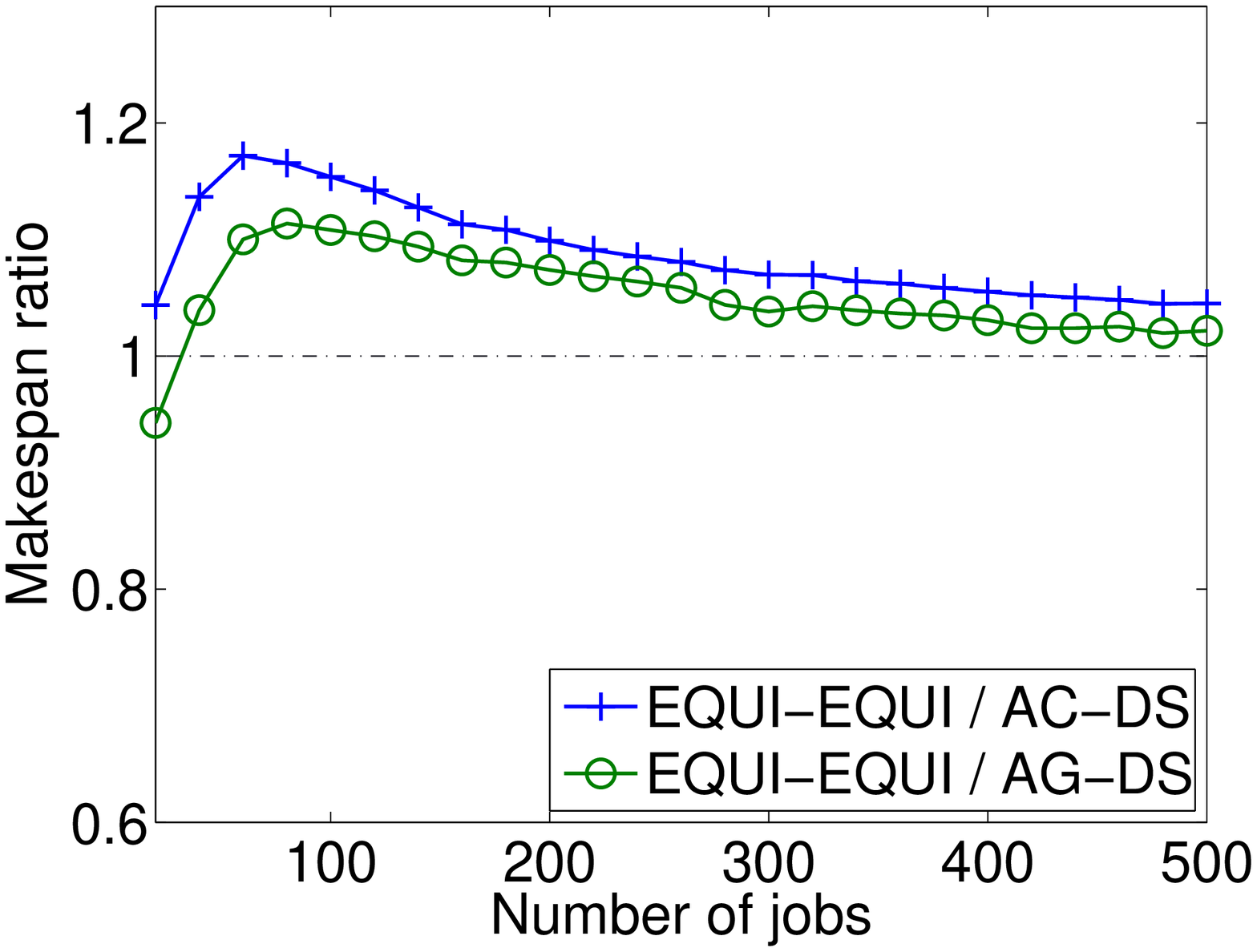}
    }

    \subfloat[number of levels =  4]
    {
        \label{fig:Comparison:L3}
        \includegraphics[width=2.5in]{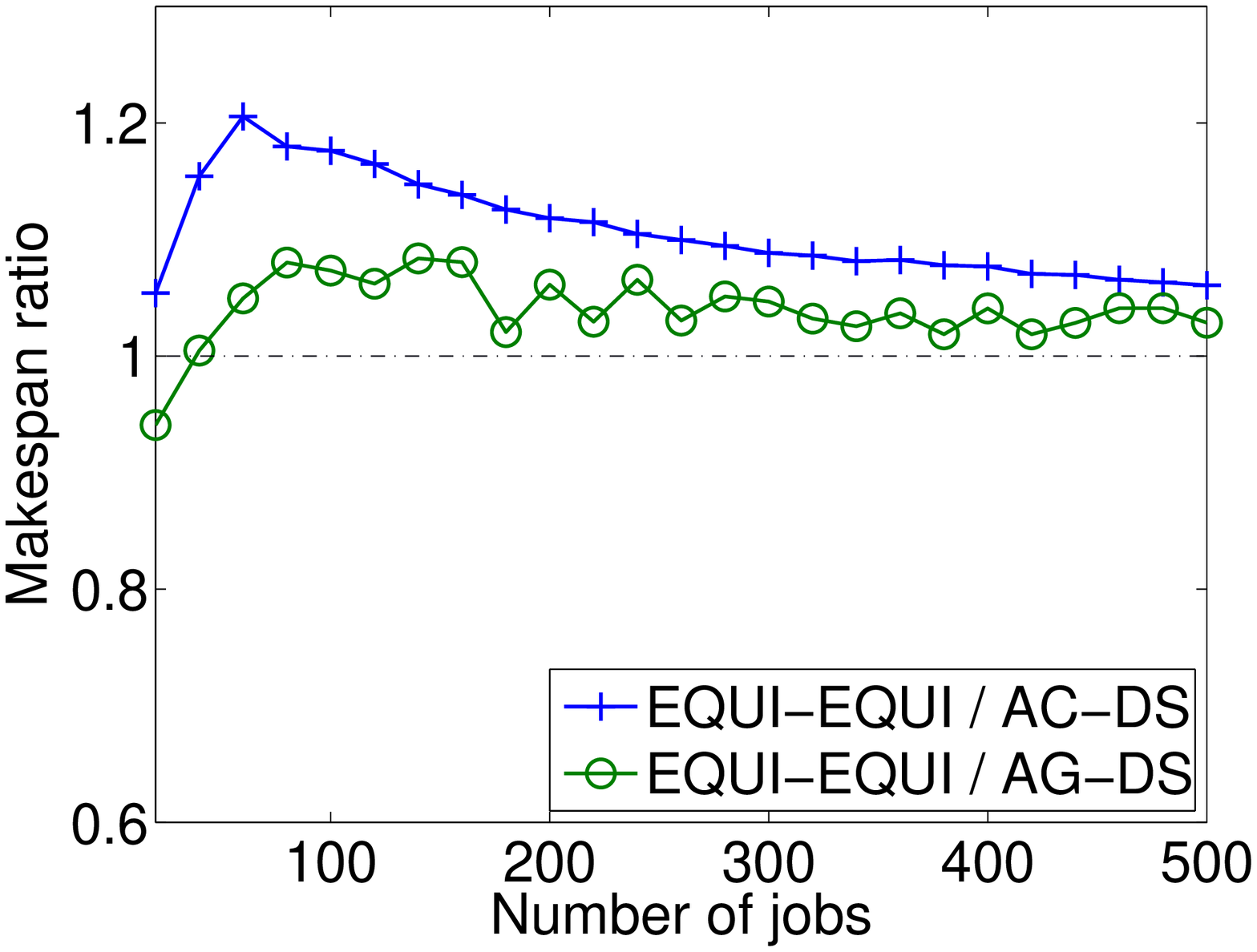}
    }
    \subfloat[number of levels =  5]
    {
        \label{fig:Comparison:L4}
        \includegraphics[width=2.5in]{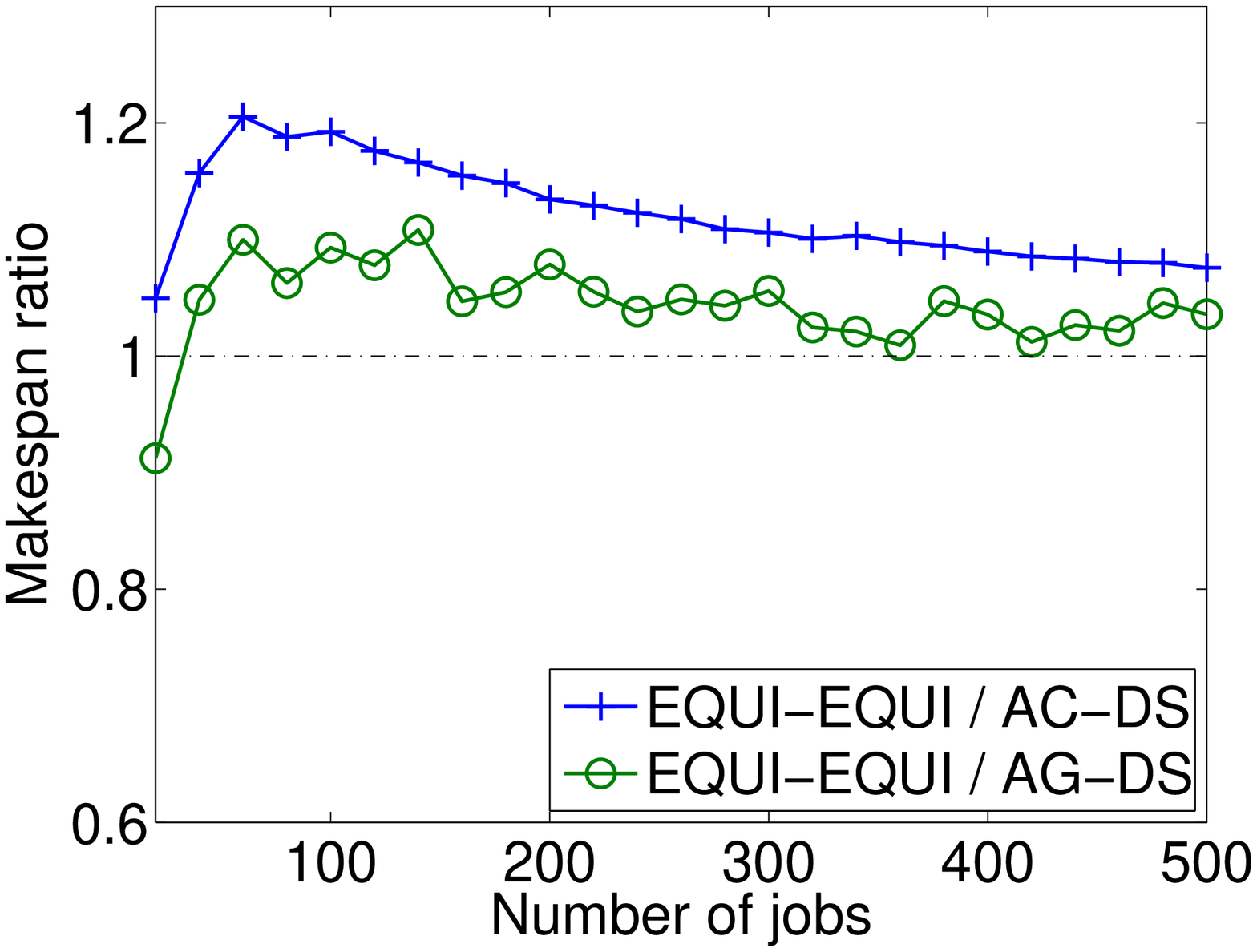}
    }
    \caption{Makespan comparisons of AC-DS and AG-DS with EQUI-EQUI.}
    \label{fig:Comparison}
\end{figure*}

\indent(3) \emph{Makespan comparison of different scheduling policies}

Due to the advantage of proactive parallelism feedbacks, both AC-DS and AG-DS achieve better
performance than EQUI-EQUI with respect to the makespan, as shown in \figref{Comparison}. Note that
in this set of figures, we present the makespan ratios by normalizing the makespans of the two
feedback-driven schedulers with that of EQUI-EQUI for easier comparison. As we can see from the
figure, AC-DS has better and more stable performance than the other two algorithms, especially with
increasing number of hierarchical levels. For example, when the number of levels is 3, the makespan
of AC-DS improves over AG-DS by only 4\% on average while the improvement becomes 16\% on average
when the number of levels is 4. Moreover, compared with the two feedback-driven schedulers, only
when the system has a small number of jobs, EQUI-EQUI shows its advantages with slightly better
makespan. The reason is that under light load almost all jobs can be easily satisfied by EQUI-EQUI,
which provides sufficiently good performance without the need of adaptive processor allocation.
With increasing system load, however, the competition for resources becomes more intensive among
the nodes and the jobs. Therefore, the performances of the feedback-driven algorithms become better
than that of EQUI-EQUI, since they can dynamically adjust the processor allocations based on the
jobs' execution history. When the system load continues to become much heavier, as shown in
\figref{Comparison}, the performances of AC-DS and AG-DS tend to converge to that of EQUI-EQUI,
because in this case any job can only receive very few processors most of time, and thus frequent
processor reallocations have no obvious benefits.

\begin{figure}[t]
\centering
    \subfloat[EQUI-EQUI / AC-DS with CF = 0]
    {
        \label{fig:DiffQuantum:L4AC}
        \includegraphics[width=2.5in]{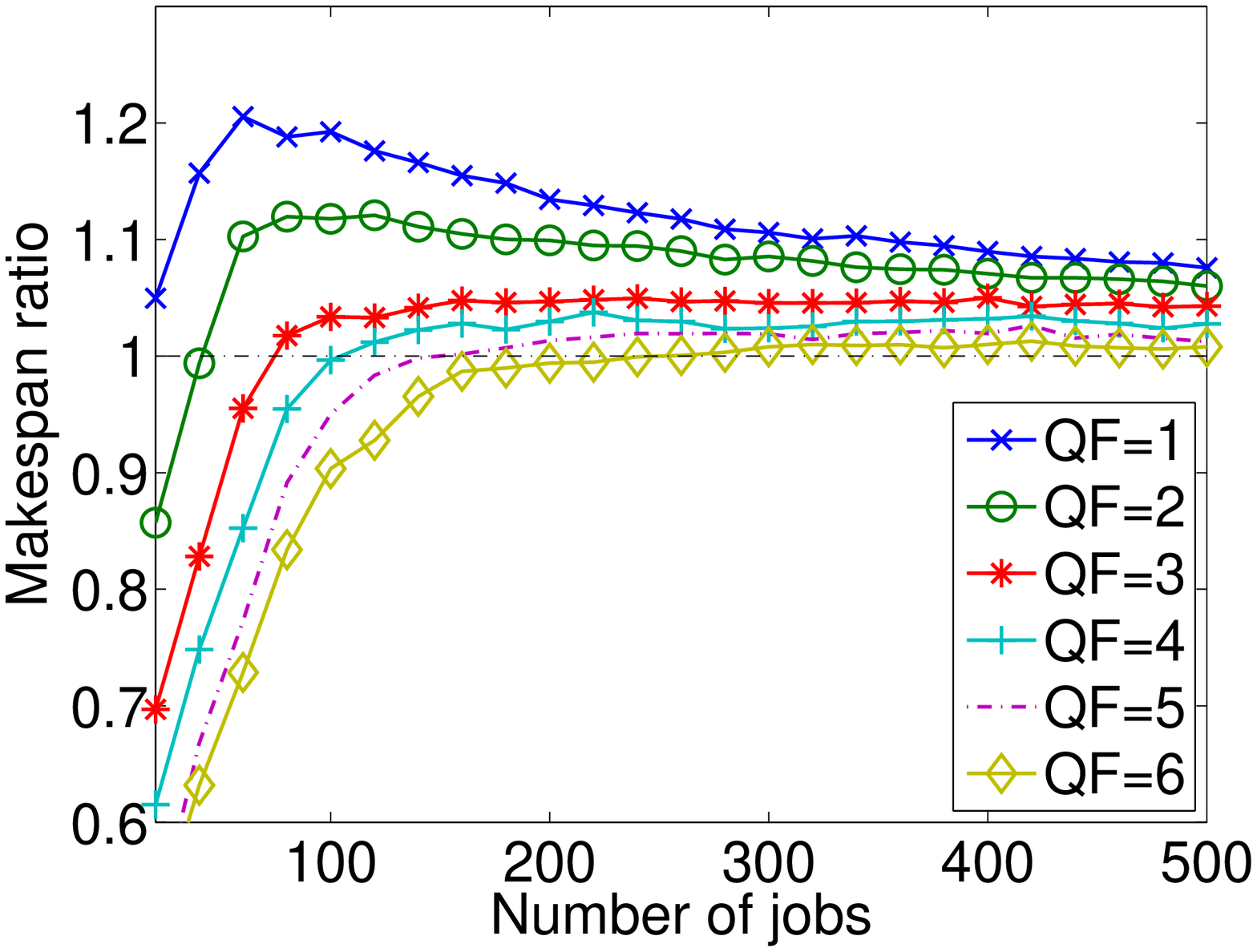}
    }
    \subfloat[EQUI-EQUI / AC-DS with CF = 1/10]
    {
        \label{fig:DiffQuantum:L4ACWithCost}
        \includegraphics[width=2.5in]{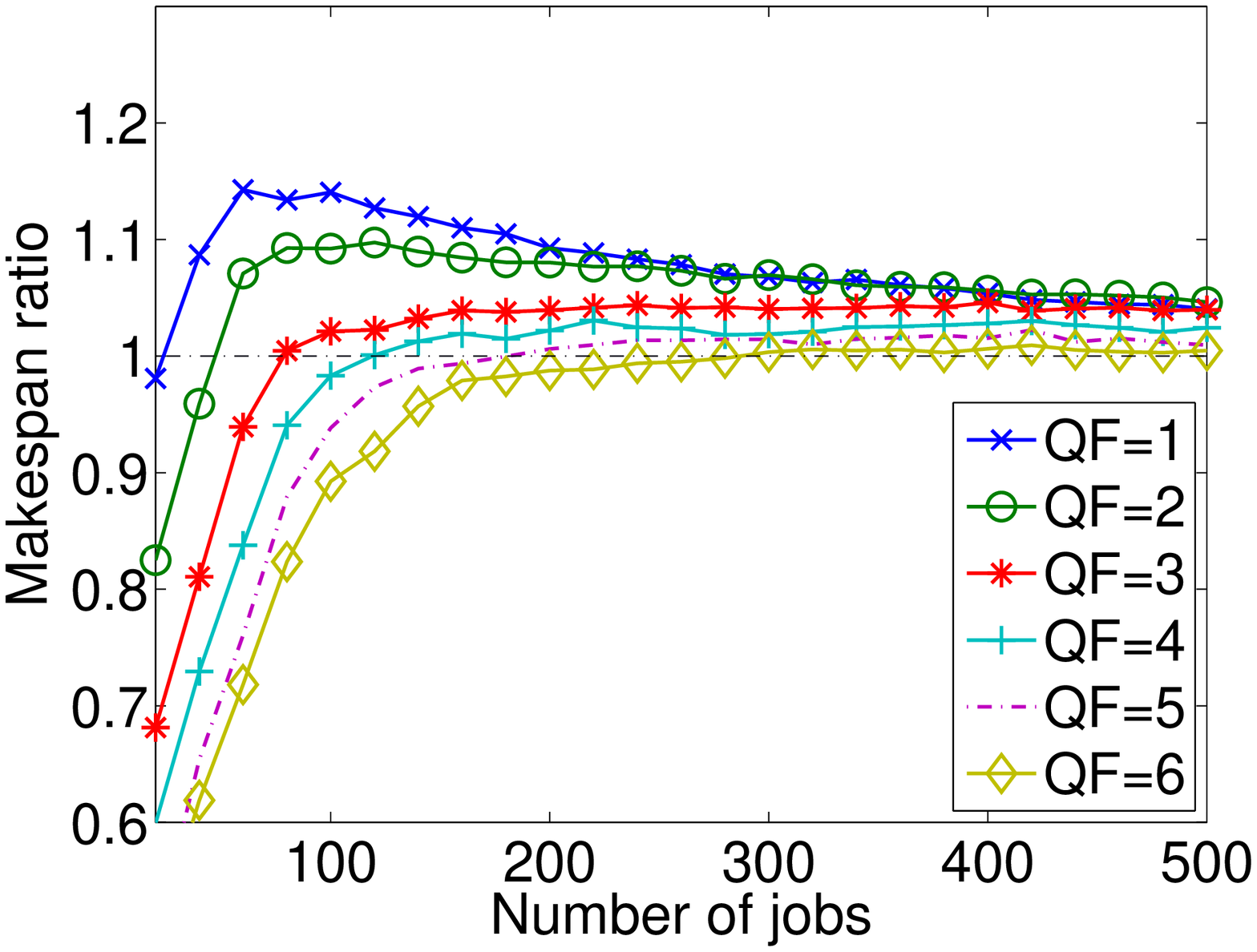}
    }
    \caption{Impact of different quantum patterns on AC-DS with respect to makespan.}
    \label{fig:DiffQuantum}
\end{figure}
\begin{figure*}[t]
\centering
    \subfloat[EQUI-EQUI / AC-DS with CF = 0]
    {
        \label{fig:UtilizationAC:Cost0}
        \includegraphics[width=2.5in]{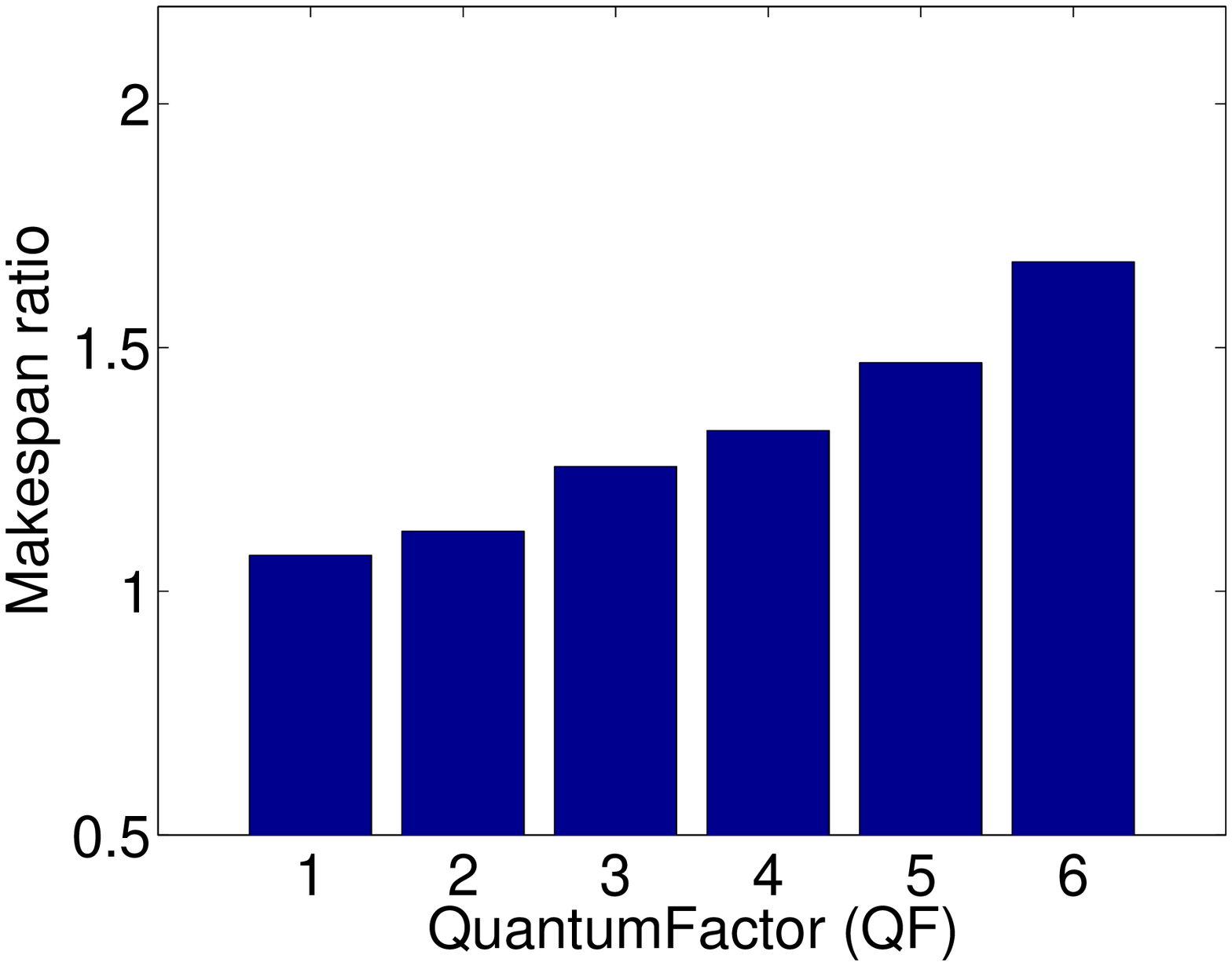}
    }
    \subfloat[EQUI-EQUI / AC-DS with CF = 1/10]
    {
        \label{fig:UtilizationAC:Cost1}
        \includegraphics[width=2.5in]{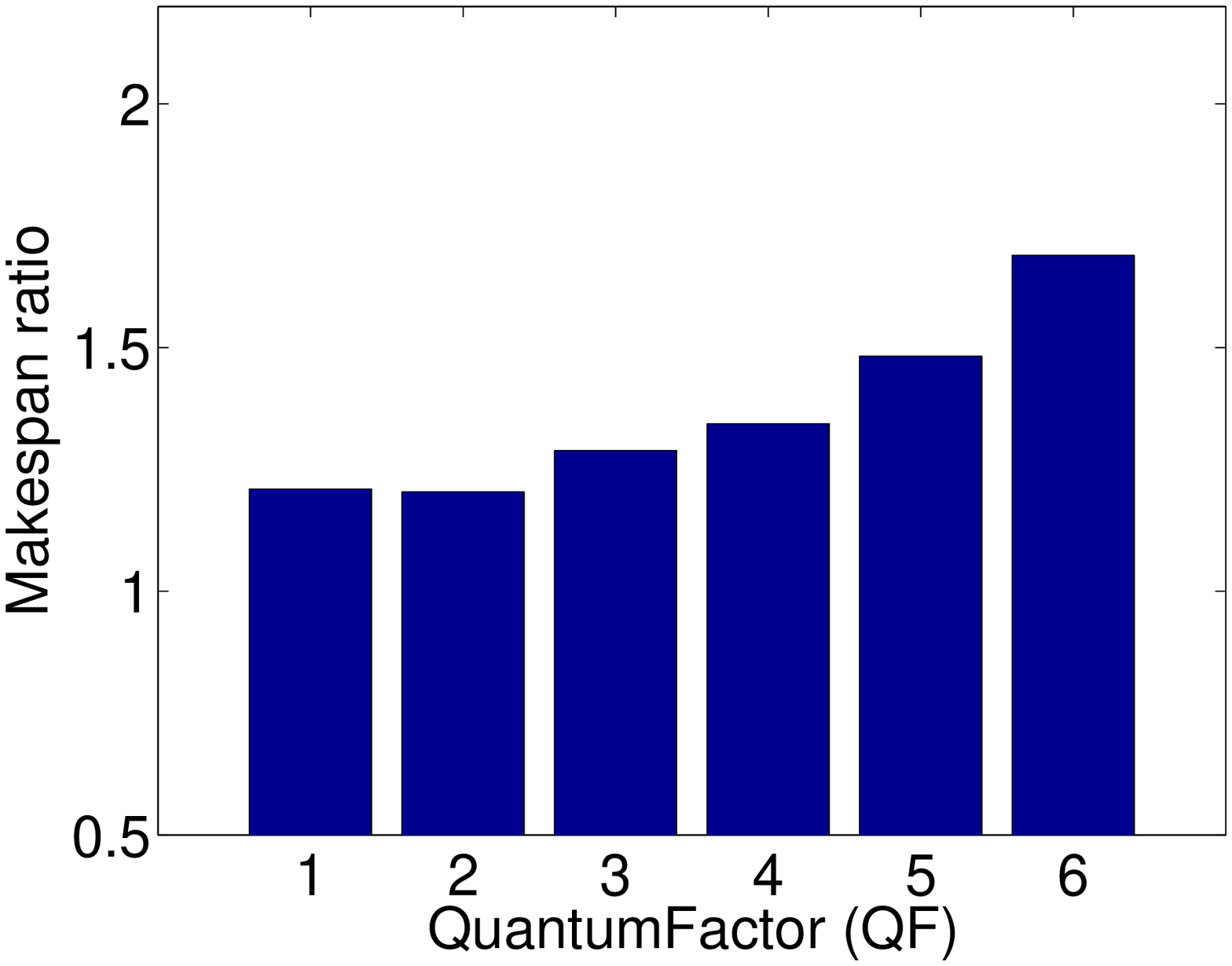}
    }

    \subfloat[EQUI-EQUI / AC-DS with CF = 1/5]
    {
        \label{fig:UtilizationAC:Cost2}
        \includegraphics[width=2.5in]{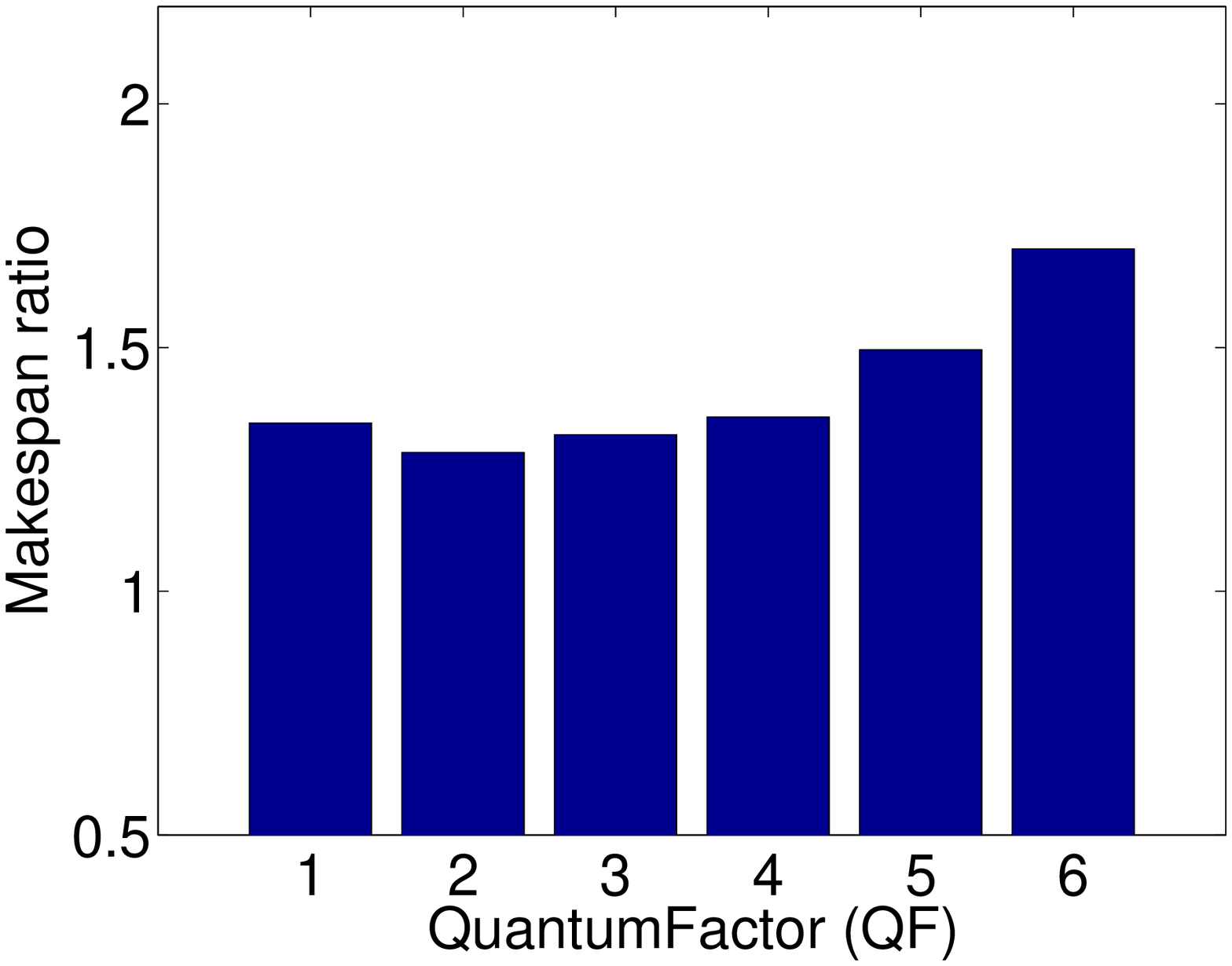}
    }
    \subfloat[EQUI-EQUI / AC-DS with CF = 1/2]
    {
        \label{fig:UtilizationAC:Cost4}
        \includegraphics[width=2.5in]{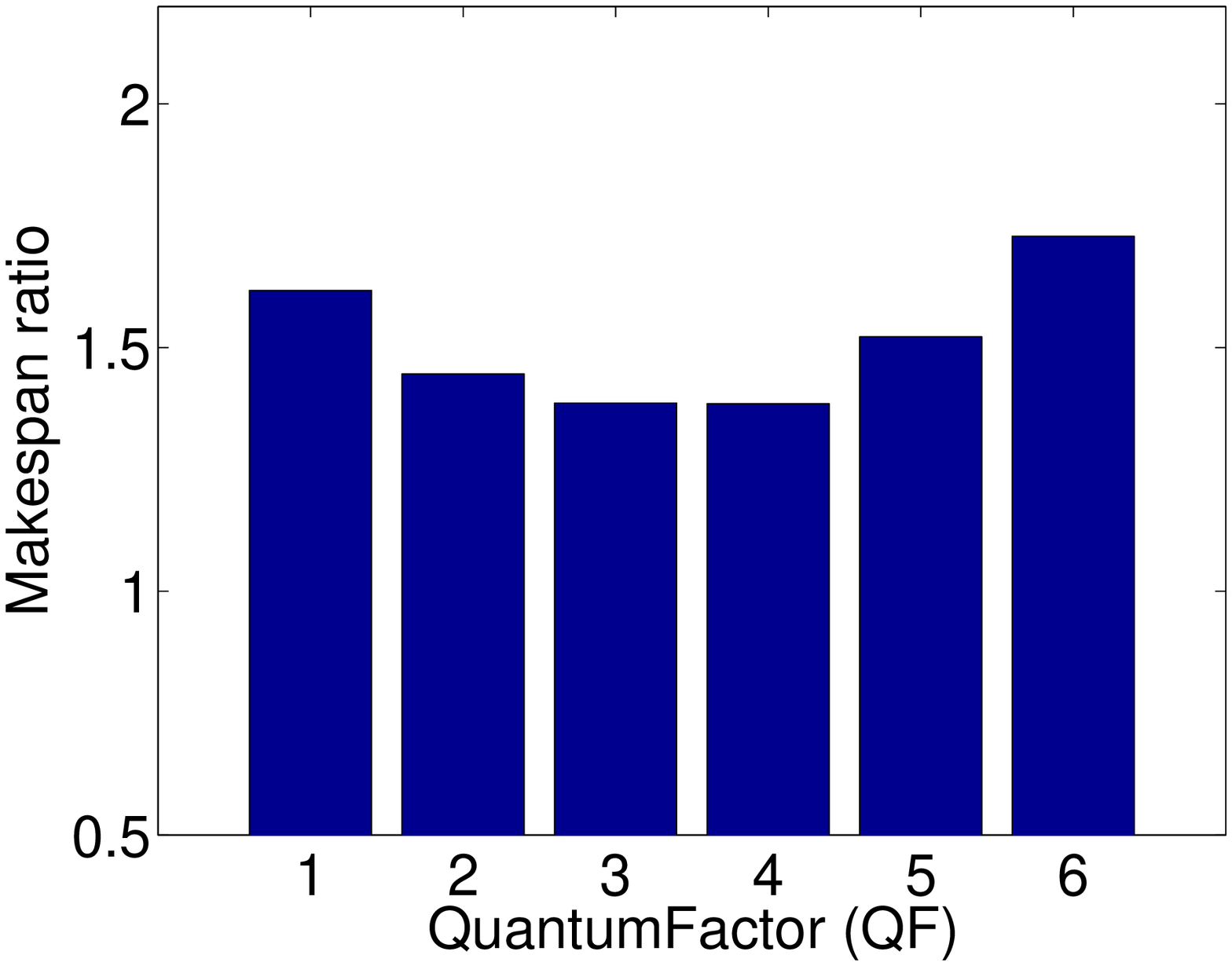}
    }
    \caption{Impact of different reallocation costs on AC-DS with respect to makespan when number of jobs is fixed to be 300.}
    \label{fig:UtilizationAC}
\end{figure*}

\indent(4) \emph{Impact of scheduling quantum and reallocation cost}

For the feedback-driven algorithms, scheduling quantum and processor reallocation cost are
important system parameters, which may significantly affect the overall performance. In this
section, we focus on evaluating the impact of these parameters on the performance of
feedback-driven scheduling algorithms and compare them with that of EQUI-EQUI, whose reallocation
cost is much lighter at runtime. In our simulation, we fix the number of levels to be 5 and the
quantum length at a particular level is set to be $(QF)^{level-2}\cdot L$, where $level$ denotes
the level index that ranges from 2 to 5, $L$ is the quantum length at level 2, and $QF$ is a
quantum factor that denotes different quantum patterns. For example, when $QF$ is set to be 1, the
quantum lengths of all levels are the same, namely $L$. When $QF$ is set to be 2, the quantum
lengths from the lowest level to the highest level are $L, 2L, 2^2L, 2^3L$ respectively.
Furthermore, to evaluate the impact of reallocation cost, we set the delay of reallocating a
processor from one job to another to be proportional to the smallest quantum length $L$, i.e.,
$L\cdot CF$, where $CF$ is a cost factor set to be $1/10, 1/5, 1/2$ respectively. Hence, the
overall cost of reallocating $x$ processors from a job is given by $x\cdot L \cdot CF$.  Since the
reallocation cost should be successively more expensive when climbing up the hierarchical tree, the
corresponding delay at a high-level node is calculated by accumulating all its children's delay
when its quantum expires. Note that we only focus on AC-DS in this section, as AG-DS has similar
results.

We can see from Fig. \ref{fig:DiffQuantum} that different quantum patterns indeed have impacts on
the performances of the scheduling algorithms. As shown in Fig. \ref{fig:DiffQuantum:L4AC},
compared with EQUI-EQUI, the makespan of AC-DS tends to become larger when $QF$ increases. For
example, the makespan ratio of EQUI-EQUI over AC-DS is $1.13$ on average when $QF$ is set to be 1
while the ratio becomes only 1.02 on average when $QF$ is set to be 6. This means that the
feedback-driven scheduling algorithm has more benefits when $QF$ is small since they can adjust
processor allocations more effectively in this case. However, smaller $QF$ also leads to larger
reallocation cost and hence affect performance. As shown in Fig.
\ref{fig:DiffQuantum:L4ACWithCost}, when the reallocation cost $CF$ is set to be $1/10$, the
performance of AC-DS with $QF=1$ clearly degrades compared to EQUI-EQUI, and the degradation is
obviously more significant than the other values of $QF$. To clearly show the impact of
reallocation cost on AC-DS, Fig. \ref{fig:UtilizationAC} gives the simulation results when the
number of jobs is fixed to be 300 and the reallocation cost $CF$ is varied from 0 to 1/2. From
these results, we can see that increases in reallocation cost has a larger impact for the relative
performance of AC-DS when the quantum factor $QF$ is small. For example, when $CF$ is set to be
1/10, the best makespan ratio of AC-DS is achieved at $QF=2$, while AC-DS achieves the best
makespan ratio with $QF=4$ when $CF$ is changed to 1/2. In summary, these simulation results
suggest that if the reallocation overhead in the system is small enough, having a uniform quantum
length across different levels will give the feedback-driven scheduling algorithms more benefits.
Otherwise, gradually increasing the length of the scheduling quantum when climbing up the system
hierarchy seems to be a better option in order to achieve the optimal performance from the
feedback-driven schedulers.

\section{Related Work}

In this section, we review some related work on parallel workload modeling and non-clairvoyant
adaptive scheduling.

\emph{Parallel Workload Modeling.} According to the well-known classification by Feitelson and
Rudolph \cite{FeitelsonRu98}, parallel jobs can be divided into three categories from the
scheduling perspective, namely, rigid jobs, moldable jobs and malleable jobs. Rigid jobs are often
scheduled by static schedulers as they cannot run on less or more processors than specified.
Moldable jobs can run on a variable number of processors, but it cannot be modified once the job is
started. Hence, the initial decision by the scheduler will determine the overall system
performance. For malleable jobs, their processor allocation can be dynamically changed at runtime,
and hence they provide the most flexibility for the schedulers to optimize performance. Many
existing parallel job models \cite{Downey98, JannPaFr97, CirneBe01, LublinFe03} exist, but they
only consider rigid and moldable jobs, and to the best of our knowledge no previous work has
explicitly modeled malleable jobs. This paper provides a malleable parallel job model by specifying
a generic set of interval parallelism variation curves.

\emph{Adaptive Scheduling.}  To take advantages of malleable parallel jobs, adaptive scheduling has
been extensively studied both theoretically and empirically in the literature. From theoretical
perspective, Agrawal et al. \cite{AgrawalHeHs06,AgrawalLeHe08} studied adaptive scheduling using
parallelism feedback. They proposed two adaptive schedulers, namely A-Greedy and A-Steal, based on
a multiplicative-increase multiplicative-decrease strategy, and proved that both scheduling
algorithms are efficient in terms of the running time and the processor waste for an individual
job. In \cite{HeHsLe08}, He et al. combined A-Greedy and A-Steal with OS allocator DEQ
\cite{McCannVaZa93} and proved that the resulting two-level adaptive schedulers AGDEQ and ASDEQ are
O(1)-competitive in terms of the makespan. Under the same two-level adaptive scheduling framework,
Sun et al. \cite{SunCaHs11} proposed an improved adaptive scheduler ACDEQ, where the parallelism
feedback is calculated by using an adaptive controller called A-CONTROL based on principles from
the classical control theory. From algorithmic perspective, they proved that the two-level adaptive
scheduler ACDEQ achieved a competitive ratio of O(1) with respect to the makespan.

Many empirical studies on adaptive scheduling also exist in the literature. Agrawal et al.
\cite{AgrawalHeLe06} presented experimental results on feedback-driven adaptive schedulers. They
showed that the feedback-driven schedulers indeed have superior performance than the schedulers
without parallelism feedback. He et al. \cite{HeHsLe08} evaluated the performance AGDEQ under a
wide range of workloads, and showed that it actually performs much better in practice than
predicted by the theoretical bounds. Using simulations, Sun et al. \cite{SunCaHs11} also confirmed
that the ACDEQ scheduler with more stable parallelism feedback outperforms the other
feedback-driven algorithms in terms of both individual job performance and makespan for a set of
jobs. In addition, Weissman et al. \cite{WeissmanAbEn03}, Corbal\'{a}n et al. \cite{CorbalanMaLa05}
and Sudarson et al. \cite{Sudarsan07} have implemented various adaptive scheduling strategies on
different platforms based on measurements of certain job characteristics such as speedup,
efficiency, execution time, etc. All of them reported success in improving the system performances
with adaptive scheduling.

\emph{Non-clairvoyant Scheduling.} We now review some related work for the non-clairvoyant
scheduling scenario. Non-clairvoyant scheduling was first introduced by Motwani et al.
\cite{MotwaniPhTo94} in an attempt to design algorithms that are provably efficient for practical
purposes. In multiprocessor environments, a well-known non-clairvoyant scheduling algorithm is EQUI
(Equi-partitioning) \cite{Edmonds99,EdmondsChBr03}, which equally shares the available processors
among all active jobs. For the makespan minimization problem, it was shown in \cite{RobertSc07}
that EQUI achieves a competitive ratio of $O(\frac{\ln{n}}{\ln{\ln{n}}})$ when jobs are organized
in two levels, where $n$ is the total number of jobs in the system, and that no better ratio is
possible. Two closely related work to ours in a similar setting are by Robert et al.
\cite{RobertSc07} and Sun et al. \cite{SunCaHs11-2, SunHsCa14}. In \cite{RobertSc07}, the authors considered a
three-level hierarchy by organizing the jobs in different job sets and present an online scheduling
algorithm EQUI$\circ$EQUI, which first splits evenly the available processors among the job sets,
and then splits evenly the allocated processors among the jobs of each set. They considered the
objective of set response time, i.e., the sum of makespan of all sets, and prove that
EQUI$\circ$EQUI achieves a competitive ratio of $(2+\sqrt{3}+o(1))\frac{\ln{n}}{\ln{\ln{n}}}$. The same performance metric was considered in
\cite{SunCaHs11-2}, but the authors combined EQUI with
the feedback-driven adaptive policies AGDEQ \cite{HeHsLe08} and ACDEQ \cite{SunCaHs11} to allocate the
processor resources in a both fair and efficient manner under the non-clairvoyant setting. The proposed algorithm were shown to achieve $O(1)$-competitiveness.
Finally, Sun et al. \cite{SunHsCa14} generalized the result to an arbitrary number of hierarchical levels for the metric of set response time . 

\section{Conclusions}

In this paper, we have focused on the problem of hierarchical scheduling for malleable parallel
jobs on multilayered computing systems. We proposed a feedback-driven adaptive scheduling
algorithms, called AC-DS, and showed that it achieves competitive and scalable performance in terms
of makespan. A novel malleable job model is developed to verify the effectiveness of this
algorithm. The results demonstrate that our algorithm has good scalability with increasing number
of hierarchical levels and it outperforms two other natural schedulers for a wide range of
workloads. 

\section*{Acknowledgment}

This work is partially supported by China National Hi-tech Research and Development Program (863
Project) under the grants No. 2011AA01A201, 2009AA01A131 and Natural Science Foundation of China
under the grant No.61073011,61133004, 61173039.

\bibliographystyle{}

\end{document}